\documentclass[preprint,12pt,authoryear]{elsarticle}

\usepackage{amsmath,amssymb,amsthm}
\usepackage{mathtools}
\usepackage{graphicx}
\usepackage{booktabs}
\usepackage{tikz}
\usetikzlibrary{decorations.pathreplacing,decorations.pathmorphing}
\usepackage{pifont}
\usepackage{float}
\usepackage{hyperref}
\usepackage{cleveref}
\usepackage{lineno}
\newtheorem{theorem}{Theorem}[section]
\newtheorem{corollary}[theorem]{Corollary}
\newtheorem{lemma}[theorem]{Lemma}
\newtheorem{proposition}[theorem]{Proposition}

\theoremstyle{definition}
\newtheorem{definition}[theorem]{Definition}
\theoremstyle{remark}
\newtheorem{remark}[theorem]{Remark}
\newtheorem{example}[theorem]{Example}

\journal{}

\begin{document}

\begin{frontmatter}

%% Title, authors and addresses

%% use the tnoteref command within \title for footnotes;
%% use the tnotetext command for theassociated footnote;
%% use the fnref command within \author or \affiliation for footnotes;
%% use the fntext command for theassociated footnote;
%% use the corref command within \author for corresponding author footnotes;
%% use the cortext command for theassociated footnote;
%% use the ead command for the email address,
%% and the form \ead[url] for the home page:
%% \title{Title\tnoteref{label1}}
%% \tnotetext[label1]{}
%% \author{Name\corref{cor1}\fnref{label2}}
%% \ead{email address}
%% \ead[url]{home page}
%% \fntext[label2]{}
%% \cortext[cor1]{}
%% \affiliation{organization={},
%%            addressline={}, 
%%            city={},
%%            postcode={}, 
%%            state={},
%%            country={}}
%% \fntext[label3]{}

\title{Geometric Characterization of Context-Free Intersections via the Inner Segment Dichotomy} 
%% use optional labels to link authors explicitly to addresses:
%% \author[label1,label2]{}
%% \affiliation[label1]{organization={},
%%             addressline={},
%%             city={},
%%             postcode={},
%%             state={},
%%             country={}}
%%
%% \affiliation[label2]{organization={},
%%             addressline={},
%%             city={},
%%             postcode={},
%%             state={},
%%             country={}}

\author{Jorge Miguel Silva\corref{cor1}}
\ead{jorge.miguel.ferreira.silva@ua.pt}
\cortext[cor1]{Corresponding author}

\affiliation{organization={IEETA/DETI, University of Aveiro},
            addressline={Campus Universitário de Santiago}, 
            city={Aveiro},
            postcode={3810-193}, 
            state={},
            country={Portugal}}

\begin{abstract}
The intersection of two context-free languages is not generally context-free, but no geometric criterion has characterized \emph{when} it remains so. The \emph{crossing gap} ($\max(i'-i,\;j'-j)$ for two crossing push-pop arcs) is the natural candidate. We refute this: we exhibit CFLs whose intersection is CFL despite unbounded-gap crossings.

The governing quantity is the \emph{inner segment measure}: for crossing arcs inducing a decomposition $w = P_1 P_2 P_3 P_4$, it is $\max(|P_2|,|P_3|)$, the length of the longer inner segment between interleaved crossing endpoints. We prove a dichotomy for this measure: bounded inner segments imply context-freeness via a finite buffer construction; growing inner segments with pump-sensitive linkages imply non-context-freeness. The inner segment concept applies to all CFL intersections; the strictness of the resulting characterization depends on the language class.

For block-counting CFLs (languages requiring equality among designated pairs of block lengths), the dichotomy is complete: the intersection is CFL if and only if the combined arcs are jointly well-nested. For general CFLs, the CFL direction is unconditional; the non-CFL direction requires pump-sensitive linkages whose necessity is the main open problem, reducing the general CFL intersection problem to a specific property of pump-sensitive decompositions.

\end{abstract}

%%Graphical abstract
% \begin{graphicalabstract}
% %\includegraphics{grabs}
% \end{graphicalabstract}

\begin{highlights}
\item The crossing gap between PDA arcs does not determine context-freeness of CFL intersections.
\item The inner segment measure, the length of the longer inner segment between interleaved crossing endpoints, is the governing quantity.
\item Bounded inner segments yield a buffered PDA construction; growing inner segments with pump-sensitive linkages imply non-CFL.
\item For block-counting CFLs, the dichotomy is complete: CFL if and only if arcs are jointly well-nested.
\item The general CFL intersection problem reduces to a specific question about pump-sensitive decompositions.
\end{highlights}

%% Keywords
\begin{keyword}
context-free language intersection \sep pushdown automata \sep inner segment measure \sep well-nestedness \sep crossing dependencies \sep pump-sensitive linkages \sep LCFRS
\MSC 68Q45 \sep 68Q42 \sep 03D05
\end{keyword}

\end{frontmatter}

%% Add \usepackage{lineno} before \begin{document} and uncomment 
%% following line to enable line numbers
%% \linenumbers

%% main text
%%

%% ====================================================================
\section{Introduction}\label{sec:intro}
%% ====================================================================

Context-free languages are not closed under intersection. The power of a pushdown automaton lies in its stack, and the stack's limitation is its rigidity: it provides last-in, first-out (LIFO) access, so information buried below the top is inaccessible until everything above it has been removed.

When two CFLs are intersected, the question is whether two independent stack protocols, each tracking its own push-pop dependencies, can be interleaved into a single stack. If the two sets of dependency arcs nest cleanly, one stack handles both. If they do not, the intersection may leave the CFL class entirely. The language $\{a^m b^n c^m d^n\}$, arising from $\{a^m b^* c^m d^*\} \cap \{a^* b^n c^* d^n\}$, is the classic example: its $a$-to-$c$ and $b$-to-$d$ dependencies \emph{cross}, and no single stack can process both~\cite{shieber1985,joshi1990}. Bodirsky, Kuhlmann, and M\"ohl~\cite{bodirsky2005} formalized this via \emph{well-nestedness} in the LCFRS hierarchy.

However, not all crossings destroy context-freeness. If arcs from the two PDAs cross only at adjacent positions, as in the interleaved palindrome language of \Cref{ex:palindrome}, the crossing is local and a pair-alphabet encoding absorbs it. This raises a geometric question: \emph{what property of crossing dependency arcs determines whether two stack protocols can share one stack?} While initially motivated by structural decomposition problems in sequence analysis and data compression, this question applies to all context-free languages and has remained open in the automata-theoretic literature.

\medskip
\noindent\textbf{The natural conjecture, and its refutation.} The \emph{crossing gap} of two crossing arcs, defined formally as $\max(i'-i,\;j'-j)$ (the larger of the push-endpoint distance and the pop-endpoint distance; \Cref{def:gap}), is a natural candidate: if the gap grows with input size, the stack should break; bounded-gap crossings should be harmless.

This conjecture is false. In \Cref{prop:gap-refuted}, we exhibit two CFLs whose intersection remains context-free despite having crossing arcs with crossing gap $\Theta(n)$. The construction is explicit: one PDA performs a single push-pop pair spanning two positions, while the other PDA's matching arc spans $\Theta(n)$ positions around it. Both inner segments have length exactly~$1$, so the ``interruption'' is absorbed by a finite buffer, yet the crossing gap grows without bound. Intuitively, an arc that spans many positions but encloses no other machine's operations creates a large gap without imposing any burden on the stack.

\medskip
\noindent\textbf{The governing measure: inner segments.} The right quantity is the \emph{inner segment measure} $\max(|P_2|,|P_3|)$. When two arcs $(i,j)$ and $(i',j')$ cross with $i < i' < j < j'$, the string decomposes as $w = P_1 P_2 P_3 P_4$, and this measure captures the length of the longer inner segment between the interleaved endpoints. Bounded interruptions are \emph{bufferable}: a finite-state buffer absorbs the short crossing arcs, and a single PDA handles the rest. Unbounded interruptions break the LIFO discipline irreparably.

Our main results formalize this intuition:

\begin{enumerate}
\item \textbf{Bounded inner segments $\Rightarrow$ CFL} (\Cref{thm:buffer}): if all crossing pairs have $\max(|P_2|,|P_3|) = O(1)$, we construct a PDA recognizing the intersection. The buffer size is exponential in the bound but independent of input length.

\item \textbf{Growing inner segments $\Rightarrow$ not CFL} (\Cref{thm:strengthened}): if some crossing pair has $\max(|P_2|,|P_3|) = \omega(1)$ with pump-sensitive linkages (dependencies where modifying one linked segment without its partner breaks membership), the intersection is not context-free.

\item \textbf{Complete characterization for block-counting CFLs} (\Cref{thm:characterization}): the intersection is CFL if and only if the combined arcs are jointly well\-nested.

\item \textbf{Refutation of the gap conjecture} (\Cref{prop:gap-refuted}): unbounded crossing gap does not imply non-CFL; inner segment measure is the governing quantity.
\end{enumerate}

\noindent Results (1) and (2) combine into the \textbf{inner segment dichotomy} (\Cref{cor:dichotomy}), summarized in the following table:

\begin{center}
\resizebox{\textwidth}{!}{%
\begin{tabular}{llll}
\toprule
Inner segment measure & Crossing gap & Intersection & Reference \\
\midrule
$0$ (no crossings) & -- & CFL & (product construction) \\
any & $O(1)$ & CFL & \Cref{thm:bounded-gap} \\
$O(1)$ & $\omega(1)$ & CFL & \Cref{thm:buffer} \\
$\omega(1)$$^\dagger$ & $\omega(1)$ & Not CFL & \Cref{thm:strengthened} \\
\bottomrule
\end{tabular}}
\end{center}
{\small $^\dagger$Requires pump-sensitive linkages; automatic for block-counting CFLs, open for general CFLs (\Cref{rem:exhaustiveness}). Note: bounded gap (row~2) always yields CFL, so the linkage condition is satisfiable only when the gap is also unbounded.}

\medskip

\noindent\textbf{Scope.} \Cref{sec:pda} fixes notation. \Cref{sec:theorem4} through \Cref{sec:intermediate} develop the formal results. \Cref{sec:related} surveys related work, \Cref{sec:discussion} places the results in the Chomsky hierarchy, and \Cref{sec:open} lists open problems.

%% ====================================================================
\section{Preliminaries}\label{sec:pda}
%% ====================================================================

We work with nondeterministic pushdown automata in Greibach normal form (GNF): each transition reads exactly one input symbol and performs at most one stack operation, either a single push or a single pop. Binary productions ($A\to aBC$) require pushing two symbols; we decompose each such step into two transitions via an auxiliary $\epsilon$-move. Accordingly, each input position is associated with at most two pushes per machine. The details of this decomposition appear in \Cref{sec:bounded-gap}.

For a PDA~$M$ accepting a string~$w$, the \emph{push-pop matching} $\mu_M(w)$ is the set of arcs $(i,j)$ pairing the position~$i$ of a push with the position~$j$ of the corresponding pop. Since~$M$ is nondeterministic, different accepting computations may produce different matchings. Our results require only that \emph{some} accepting computation of each machine yields the relevant crossing structure; the specific computation is fixed once and used throughout.

Two arcs $(i,j)\in\mu_{M_1}(w)$ and $(i',j')\in\mu_{M_2}(w)$ \emph{cross} if $i<i'<j<j'$ or $i'<i<j'<j$. (By relabeling the machines, any crossing pair can be written with $i<i'<j<j'$; we adopt this convention throughout.) Such a crossing induces \emph{inner segments} between the interleaved endpoints; the formal decomposition appears in \Cref{sec:crossing-geometry}.

\begin{example}\label{ex:palindrome}
The interleaved palindrome language
\[
  L = \{w\in\{0,1\}^{2n} : w_1 w_3\cdots w_{2n-1}\in\mathrm{Pal},\;w_2 w_4\cdots w_{2n}\in\mathrm{Pal}\}
\]
is context-free: defining pair symbols $X_i=(w_{2i-1},w_{2i})$, both palindrome conditions require $X_i=X_{n-i+1}$, so $L$ is a palindrome language over $\{00,01,10,11\}$. Both regularities share the same matching structure (position~$i$ linked to~$n+1-i$), and the crossing gap is~$1$, i.e., adjacent positions. The pair-alphabet encoding absorbs the crossing.
\end{example}

%% ====================================================================
\section{Crossing pump-sensitive linkages imply non-CFL}\label{sec:theorem4}
%% ====================================================================

\subsection{Pump-sensitive linkage}

The non-CFL direction of our results requires a coupling condition on string segments. Informally, two segments of a string are \emph{pump-sensitively linked} if they are so tightly coupled that modifying one without the other breaks language membership. Unlike the standard pumping lemma, which guarantees that \emph{some} factorization can be pumped \emph{while remaining in the language}, pump-sensitive linkage requires the opposite: \emph{every} isolated pump (any factorization that touches one linked segment while avoiding the other) takes the string \emph{out of} the language.

\begin{definition}\label{def:linkage}
Let $L\subseteq\Sigma^*$. A \emph{pump-sensitive linkage} $(P_i,P_j)$ in~$L$ on a string $w=P_1 P_2\cdots P_k$ (where $P_1,\dots,P_k$ are consecutive segments and $i<j$) is the property that: for any factorization $w=uvxyz$ with $|vy|\ge 1$ and $vxy$ intersecting~$P_i$ but disjoint from~$P_j$, or intersecting~$P_j$ but disjoint from~$P_i$, we have $uv^2xy^2z\notin L$.
\end{definition}

Informally: pumping anywhere in one linked segment, without simultaneously modifying the other, takes the string out of~$L$. As a familiar example, in the language $\{a^n b^n\}$, the segments $a^n$ and $b^n$ are pump-sensitively linked: any factorization that pumps characters in the $a$-region alone (without touching the $b$-region) breaks the equality $\#_a = \#_b$, and vice versa. The definition generalizes this tight coupling from simple counting constraints to arbitrary language membership conditions. Note that linkage is an \emph{input condition} verified from the language's membership conditions, analogous to choosing marked positions in Ogden's lemma, not a conclusion derived from knowing the language is non-CFL. For block-counting CFLs, it follows directly from the equality constraints (\Cref{sec:block-inner}); for general CFLs, whether it can always be derived from arc geometry alone is the main open problem (\Cref{sec:open}, Question~1).

\begin{remark}[Universal quantification is intentional]\label{rem:linkage-quantifier}
The definition requires that \emph{every} factorization touching one linked segment alone fails to pump, not merely the existence of such a factorization. This is strictly stronger than what the pumping lemma provides (which guarantees only \emph{some} pumpable factorization). The strength is necessary: \Cref{thm:crossing} works by showing that the pumping lemma's guaranteed factorization must fall in one of seven cases, each of which touches exactly one member of some linked pair. For the argument to go through, \emph{all} such cases must fail; hence the universal quantification.

Verifying this condition for specific languages requires showing that the segments are ``tightly coupled'' to their partners: any local modification to one side, however positioned, breaks the structural constraint. For block-counting CFLs with crossing arcs, this holds on the intersection~$L_1\cap L_2$: any pump modifying one block's count without modifying its constrained partner's violates at least one of the intersection's equality constraints (see \Cref{rem:linkage-ambient}). For general CFLs, verifying the condition is nontrivial and is the source of the conditionality in the general dichotomy.
\end{remark}

\subsection{The crossing linkage theorem}

\begin{theorem}[Pumping incompatibility of crossing linkages]\label{thm:crossing}
Let $L\subseteq\Sigma^*$ be a language containing, for infinitely many~$n$, a string~$w_n$ with a factorization $w_n=P_1 P_2 P_3 P_4$ satisfying:
\begin{enumerate}
\item[\textup{(i)}] $|P_i|\ge n$ for each $i\in\{1,2,3,4\}$.
\item[\textup{(ii)}] $L$ has a pump-sensitive linkage $(P_1,P_3)$.
\item[\textup{(iii)}] $L$ has a pump-sensitive linkage $(P_2,P_4)$.
\end{enumerate}
Then $L$ is not context-free.
\end{theorem}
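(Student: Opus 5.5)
We argue by contradiction using the classical pumping lemma for context-free languages. Suppose $L$ is context-free with pumping constant $p$. By hypothesis, choose $n\ge p$ and a string $w_n=P_1P_2P_3P_4\in L$ satisfying (i)--(iii). Then $|w_n|\ge 4n\ge p$, so the pumping lemma gives a factorization $w_n=uvxyz$ with $|vxy|\le p$, $|vy|\ge 1$, and $uv^kxy^kz\in L$ for all $k\ge 0$, in particular $uv^2xy^2z\in L$. The plan is to show that this guaranteed factorization always "touches exactly one member" of one of the two linkages. Condition (ii) or (iii) then forces $uv^2xy^2z\notin L$, which is the contradiction.

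The key geometric step uses (i). Since $|vxy|\le p\le n\le |P_2|,|P_3|$, the window $vxy$ is a contiguous block shorter than each inner segment. It therefore cannot meet two segments that are separated by a whole segment. So $vxy$ meets either exactly one segment $P_t$, or exactly two adjacent segments $P_t,P_{t+1}$. This gives seven cases: $\{P_1\},\{P_2\},\{P_3\},\{P_4\},\{P_1,P_2\},\{P_2,P_3\},\{P_3,P_4\}$. (Note that $|P_1|,|P_4|\ge n$ is only needed for symmetry of bookkeeping; the essential bounds are those on $P_2$ and $P_3$.) In each case we exhibit the linkage that is violated:
\begin{itemize}
\item If $vxy$ meets $P_1$ or $P_3$ but not both, and this is the case in $\{P_1\}$, $\{P_3\}$, $\{P_1,P_2\}$, $\{P_2,P_3\}$, $\{P_3,P_4\}$, then $vxy$ intersects one member of $(P_1,P_3)$ and is disjoint from the other. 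Hence (ii) gives $uv^2xy^2z\notin L$.
\item The remaining cases $\{P_2\}$ and $\{P_4\}$ meet exactly one member of $(P_2,P_4)$, so (iii) applies.
\end{itemize}
In fact, since $vxy$ can never meet both $P_1$ and $P_3$, the linkage $(P_1,P_3)$ alone covers every case except those avoiding both $P_1$ and $P_3$. Those are exactly $\{P_2\}$ and $\{P_4\}$, which $(P_2,P_4)$ covers.

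The main point requiring care is precisely the universal quantifier in \Cref{def:linkage}. The pumping lemma hands us a factorization we do not control, so the linkage must reject every isolated pump, as noted in \Cref{rem:linkage-quantifier}. We must also check that "intersecting" in the definition is measured on the window $vxy$ rather than just on $vy$; the definition is phrased in terms of $vxy$, which matches the case analysis above. A further point is that the $\{P_2,P_3\}$ case, which touches both inner segments, still avoids $P_1$ and $P_3$ together only if it misses $P_3$, and that is false. So we must make sure to classify it via $(P_1,P_3)$: it meets $P_3$ and not $P_1$. Once these are verified, every case yields $uv^2xy^2z\notin L$, contradicting the pumping lemma, and so $L$ is not context-free.
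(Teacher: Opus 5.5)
Your proof is correct and follows the paper's argument: take $n$ at least the pumping length, use $|P_2|,|P_3|\ge n$ to confine $vxy$ to one segment or two adjacent segments, and dispatch each of the seven cases with the linkage it violates. The only difference is cosmetic: you handle the $P_2P_3$ straddle with $(P_1,P_3)$ (it meets $P_3$ but not $P_1$), while the paper uses $(P_2,P_4)$ (it meets $P_2$ but not $P_4$). Either works, so your remark that this case ``must'' go through $(P_1,P_3)$ is unnecessary, and your observation that $|P_1|$ and $|P_4|$ play no role is correct and anticipates \Cref{thm:strengthened}.
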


\Cref{fig:pumping} illustrates the case analysis: every placement of the bounded-length substring~$vxy$ touches exactly one member of a linked pair, triggering one of the two linkage conditions.

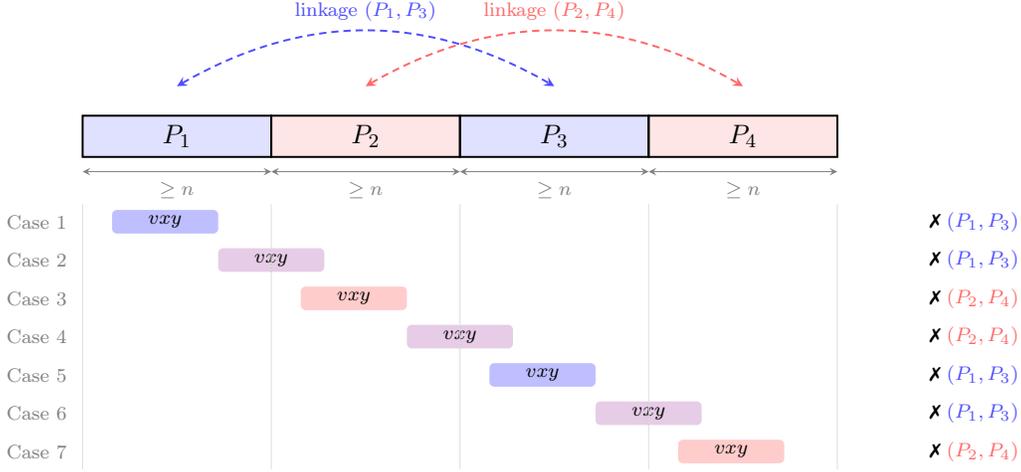
\begin{figure}[ht]
\centering
\resizebox{\textwidth}{!}{%
\begin{tikzpicture}[>=stealth, scale=0.85]
  \def\segH{0.7}
  \def\sw{3.2} % segment width

  % Draw segments
  \fill[blue!12] (0,0) rectangle (\sw,\segH);
  \fill[red!10]  (\sw,0) rectangle (2*\sw,\segH);
  \fill[blue!12] (2*\sw,0) rectangle (3*\sw,\segH);
  \fill[red!10]  (3*\sw,0) rectangle (4*\sw,\segH);

  \draw[thick] (0,0) rectangle (\sw,\segH);
  \draw[thick] (\sw,0) rectangle (2*\sw,\segH);
  \draw[thick] (2*\sw,0) rectangle (3*\sw,\segH);
  \draw[thick] (3*\sw,0) rectangle (4*\sw,\segH);

  % Segment labels
  \node[font=\small] at (0.5*\sw,0.35) {$P_1$};
  \node[font=\small] at (1.5*\sw,0.35) {$P_2$};
  \node[font=\small] at (2.5*\sw,0.35) {$P_3$};
  \node[font=\small] at (3.5*\sw,0.35) {$P_4$};

  % Size annotations below main bar
  \foreach \i in {0,1,2,3} {
    \draw[<->,gray,font=\scriptsize] ({\i*\sw},-0.25) -- node[below]{$\ge n$} ({(\i+1)*\sw},-0.25);
  }

  % Linkage arcs above — both at same height
  \draw[red!60,thick,<->,densely dashed] (1.5*\sw,1.2) to[out=30,in=150]
    node[above,font=\scriptsize]{linkage $(P_2,P_4)$} (3.5*\sw,1.2);

  \draw[blue!70,thick,<->,densely dashed] (0.5*\sw,1.2) to[out=30,in=150]
    node[above,font=\scriptsize]{linkage $(P_1,P_3)$} (2.5*\sw,1.2);

  % === Case rows below ===
  % Each case on its own row, with generous vertical spacing
  \def\rh{0.4}  % row height
  \def\rs{0.65} % row spacing
  \def\bw{1.8}  % vxy box width
  \def\labelX{14.2} % right-aligned label position

  % Row template: y-coord, x-start of vxy box, fill color, case label, contradiction
  % Case 1: vxy ⊆ P1
  \pgfmathsetmacro{\ry}{-1.3}
  \fill[blue!25,rounded corners=2pt] (0.5,\ry) rectangle ({0.5+\bw},{\ry+\rh});
  \node[font=\scriptsize] at ({0.5+0.5*\bw},{\ry+0.5*\rh}) {$vxy$};
  \node[font=\scriptsize,anchor=west] at (\labelX,{\ry+0.5*\rh}) {\ding{55}\;\textcolor{blue!70}{$(P_1,P_3)$}};
  \node[font=\scriptsize,gray,anchor=east] at (-0.1,{\ry+0.5*\rh}) {Case 1};

  % Case 2: vxy straddles P1–P2
  \pgfmathsetmacro{\ry}{-1.3 - \rs}
  \fill[violet!20,rounded corners=2pt] ({0.5*\sw+0.7},\ry) rectangle ({0.5*\sw+0.7+\bw},{\ry+\rh});
  \draw[gray,densely dotted] (\sw,\ry) -- (\sw,{\ry+\rh});
  \node[font=\scriptsize] at ({0.5*\sw+0.7+0.5*\bw},{\ry+0.5*\rh}) {$vxy$};
  \node[font=\scriptsize,anchor=west] at (\labelX,{\ry+0.5*\rh}) {\ding{55}\;\textcolor{blue!70}{$(P_1,P_3)$}};
  \node[font=\scriptsize,gray,anchor=east] at (-0.1,{\ry+0.5*\rh}) {Case 2};

  % Case 3: vxy ⊆ P2
  \pgfmathsetmacro{\ry}{-1.3 - 2*\rs}
  \fill[red!20,rounded corners=2pt] ({\sw+0.5},\ry) rectangle ({\sw+0.5+\bw},{\ry+\rh});
  \node[font=\scriptsize] at ({\sw+0.5+0.5*\bw},{\ry+0.5*\rh}) {$vxy$};
  \node[font=\scriptsize,anchor=west] at (\labelX,{\ry+0.5*\rh}) {\ding{55}\;\textcolor{red!60}{$(P_2,P_4)$}};
  \node[font=\scriptsize,gray,anchor=east] at (-0.1,{\ry+0.5*\rh}) {Case 3};

  % Case 4: vxy straddles P2–P3
  \pgfmathsetmacro{\ry}{-1.3 - 3*\rs}
  \fill[violet!20,rounded corners=2pt] ({1.5*\sw+0.7},\ry) rectangle ({1.5*\sw+0.7+\bw},{\ry+\rh});
  \draw[gray,densely dotted] ({2*\sw},\ry) -- ({2*\sw},{\ry+\rh});
  \node[font=\scriptsize] at ({1.5*\sw+0.7+0.5*\bw},{\ry+0.5*\rh}) {$vxy$};
  \node[font=\scriptsize,anchor=west] at (\labelX,{\ry+0.5*\rh}) {\ding{55}\;\textcolor{red!60}{$(P_2,P_4)$}};
  \node[font=\scriptsize,gray,anchor=east] at (-0.1,{\ry+0.5*\rh}) {Case 4};

  % Case 5: vxy ⊆ P3
  \pgfmathsetmacro{\ry}{-1.3 - 4*\rs}
  \fill[blue!25,rounded corners=2pt] ({2*\sw+0.5},\ry) rectangle ({2*\sw+0.5+\bw},{\ry+\rh});
  \node[font=\scriptsize] at ({2*\sw+0.5+0.5*\bw},{\ry+0.5*\rh}) {$vxy$};
  \node[font=\scriptsize,anchor=west] at (\labelX,{\ry+0.5*\rh}) {\ding{55}\;\textcolor{blue!70}{$(P_1,P_3)$}};
  \node[font=\scriptsize,gray,anchor=east] at (-0.1,{\ry+0.5*\rh}) {Case 5};

  % Case 6: vxy straddles P3–P4
  \pgfmathsetmacro{\ry}{-1.3 - 5*\rs}
  \fill[violet!20,rounded corners=2pt] ({2.5*\sw+0.7},\ry) rectangle ({2.5*\sw+0.7+\bw},{\ry+\rh});
  \draw[gray,densely dotted] ({3*\sw},\ry) -- ({3*\sw},{\ry+\rh});
  \node[font=\scriptsize] at ({2.5*\sw+0.7+0.5*\bw},{\ry+0.5*\rh}) {$vxy$};
  \node[font=\scriptsize,anchor=west] at (\labelX,{\ry+0.5*\rh}) {\ding{55}\;\textcolor{blue!70}{$(P_1,P_3)$}};
  \node[font=\scriptsize,gray,anchor=east] at (-0.1,{\ry+0.5*\rh}) {Case 6};

  % Case 7: vxy ⊆ P4
  \pgfmathsetmacro{\ry}{-1.3 - 6*\rs}
  \fill[red!20,rounded corners=2pt] ({3*\sw+0.5},\ry) rectangle ({3*\sw+0.5+\bw},{\ry+\rh});
  \node[font=\scriptsize] at ({3*\sw+0.5+0.5*\bw},{\ry+0.5*\rh}) {$vxy$};
  \node[font=\scriptsize,anchor=west] at (\labelX,{\ry+0.5*\rh}) {\ding{55}\;\textcolor{red!60}{$(P_2,P_4)$}};
  \node[font=\scriptsize,gray,anchor=east] at (-0.1,{\ry+0.5*\rh}) {Case 7};

  % Light reference lines from main bar segments down through cases
  \foreach \i in {0,1,2,3,4} {
    \draw[gray!25] ({\i*\sw},-0.8) -- ({\i*\sw},-1.3-6*\rs-0.1);
  }
\end{tikzpicture}%
}% end resizebox
\caption{The pumping obstruction for crossing linkages (\Cref{thm:crossing}). \emph{Top:} the string $w_n = P_1 P_2 P_3 P_4$ with crossing linkages $(P_1,P_3)$ and $(P_2,P_4)$. \emph{Bottom:} since $|vxy| \le p < n \le |P_i|$, the pumpable substring (shown for each case) intersects at most two adjacent segments. Every placement triggers a linkage contradiction (\ding{55}): it touches exactly one member of a linked pair without reaching the other.}
\label{fig:pumping}
\end{figure}

\begin{proof}
Assume for contradiction that $L$ is CFL with pumping length~$p$. Choose $n>p$ and consider $w_n=P_1 P_2 P_3 P_4$ with $|P_i|\ge n>p$.

By the pumping lemma, there exists a factorization $w_n=uvxyz$ with $|vxy|\le p$, $|vy|\ge 1$, and $uv^ixy^iz\in L$ for all $i\ge 0$. Since $|vxy|\le p<n\le|P_i|$, the substring~$vxy$ can intersect at most two adjacent segments. We enumerate all possible locations:

\emph{Case~1: $vxy\subseteq P_1$.} Then $vxy$ is disjoint from~$P_3$. By~(ii), $uv^2xy^2z\notin L$. Contradiction.

\emph{Case~2: $vxy$ straddles the $P_1 P_2$ boundary.} Then $vxy$ is disjoint from both $P_3$ and~$P_4$. Since $vxy$ intersects~$P_1$ but not~$P_3$, condition~(ii) gives a contradiction.

\emph{Case~3: $vxy\subseteq P_2$.} Disjoint from~$P_4$. By~(iii), contradiction.

\emph{Case~4: $vxy$ straddles the $P_2 P_3$ boundary.} Since $|P_2|>p\ge|vxy|$, the start of~$vxy$ lies strictly inside~$P_2$, so $vxy$ is disjoint from~$P_1$. Since $|P_3|>p\ge|vxy|$, the end of~$vxy$ lies strictly inside~$P_3$, so $vxy$ is disjoint from~$P_4$. Hence $vxy$ intersects~$P_2$ but not~$P_4$; by~(iii), contradiction.

\emph{Cases~5--7} ($vxy\subseteq P_3$, straddling $P_3 P_4$, $vxy\subseteq P_4$) are symmetric. All cases yield contradictions.
\end{proof}

\subsection{Application to CFL intersection}

\begin{corollary}\label{cor:cfl}
Let $L_1,L_2$ be context-free languages. Suppose there exist, for infinitely many~$n$, a string $w_n\in L_1\cap L_2$ and a factorization $w_n=P_1 P_2 P_3 P_4$ with $|P_i|\ge n$, such that $L_1$ has a pump-sensitive linkage $(P_1,P_3)$ and $L_2$ has a pump-sensitive linkage $(P_2,P_4)$. Then $L_1\cap L_2$ is not context-free.
\end{corollary}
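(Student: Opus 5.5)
The plan is to reduce directly to \Cref{thm:crossing} applied to $L=L_1\cap L_2$. That theorem needs three things for infinitely many~$n$: a string $w_n\in L$, a factorization $w_n=P_1P_2P_3P_4$ with every $|P_i|\ge n$, and pump-sensitive linkages $(P_1,P_3)$ and $(P_2,P_4)$ \emph{in~$L$ itself}. The first two are given verbatim by the hypotheses of the corollary. So the only work is to move the linkages from the ambient languages $L_1$ and $L_2$ to their intersection.

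The key step is a monotonicity observation about \Cref{def:linkage}: if $L'\subseteq L''$ and $L''$ has a pump-sensitive linkage $(P_i,P_j)$ on $w$, then so does $L'$. To check this, fix any factorization $w=uvxyz$ with $|vy|\ge 1$ such that $vxy$ meets exactly one of $P_i,P_j$. The linkage in $L''$ gives $uv^2xy^2z\notin L''$, hence $uv^2xy^2z\notin L'$.

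The definition quantifies universally over factorizations and concludes non-membership, a downward-closed property. Applying the observation with $L'=L_1\cap L_2$ and $L''=L_1$ transfers $(P_1,P_3)$ to the intersection. Applying it with $L''=L_2$ transfers $(P_2,P_4)$. Note that the definition refers only to the fixed string and its segment decomposition, so the same $w_n$ and the same $P_1,\dots,P_4$ serve for both linkages; this is why the common factorization in the hypothesis matters.

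With (i)--(iii) of \Cref{thm:crossing} now verified for $L_1\cap L_2$ on the family $\{w_n\}$, the theorem yields that $L_1\cap L_2$ is not context-free. Context-freeness of $L_1$ and $L_2$ is not actually used in the argument; it only frames the statement. I expect no real obstacle. The one point requiring care is confirming that the pumped strings in \Cref{def:linkage} are tested for membership in the language carrying the linkage, and not, for example, that $w$ must be typical of that language. Because the definition is stated purely as a non-membership condition, the transfer goes through without further hypotheses.
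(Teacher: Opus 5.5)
Your proposal is correct and follows the paper's proof. The paper likewise notes that linkages in $L_1$ and $L_2$ are inherited by the subset $L_1\cap L_2$, because the defining condition is non-membership of $uv^2xy^2z$, and then applies \Cref{thm:crossing} to the given $w_n=P_1P_2P_3P_4$; your explicit check of this monotonicity is the one-line justification the paper leaves implicit.
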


\begin{proof}
Since $L_1\cap L_2\subseteq L_1$, pump-sensitive linkages of~$L_1$ are inherited by $L_1\cap L_2$. Similarly for~$L_2$. Apply \Cref{thm:crossing}.
\end{proof}

\subsection{Verification on known examples}

\begin{example}[$\{a^n b^n c^n d^n\}$]\label{ex:abcd}
Let $L_1=\{w\in a^*b^*c^*d^*:\#_a(w)=\#_c(w)\}$, $L_2=\{w\in a^*b^*c^*d^*:\#_b(w)=\#_d(w)\}$, both context-free. Set $L=L_1\cap L_2=\{a^n b^n c^n d^n\}$. Factor $a^n b^n c^n d^n$ as $P_1=a^n$, $P_2=b^n$, $P_3=c^n$, $P_4=d^n$, all of size~$n$.

We verify the pump-sensitive linkages on~$L$ itself (not on~$L_1$ or~$L_2$ individually; see \Cref{rem:linkage-ambient} below). Linkage $(P_1,P_3)$ on~$L$: any factorization $w=uvxyz$ where $vxy$ intersects~$P_1$ but is disjoint from~$P_3$ has $vxy$ contained in the $a$-$b$ region. Since $|vy|\ge 1$, pumping changes at least one of~$\#_a$ or~$\#_b$: if~$\#_a$ changes, then $\#_a\neq\#_c=n$ violates the $L_1$-constraint; if~$\#_a$ is unchanged, then some $b$'s were pumped, giving $\#_b\neq n=\#_d$, violating the $L_2$-constraint. Either way, $uv^2xy^2z\notin L$. Linkage $(P_2,P_4)$ on~$L$: symmetric. By \Cref{thm:crossing}, $L$ is not CFL.
\end{example}

\begin{remark}\label{rem:linkage-ambient}
The linkages above hold on $L=L_1\cap L_2$, not on~$L_1$ alone. On~$L_1=\{w:\#_a=\#_c\}$, the linkage $(P_1,P_3)$ fails: a cross-boundary factorization with $v=\epsilon$, $x=a\cdot b^s$, $y=b^t$ has $vxy$ intersecting~$P_1$ but disjoint from~$P_3$, yet pumping just adds~$b$'s, preserving $\#_a=\#_c$. The intersection's additional constraints ($\#_b=\#_d$) prevent this escape.

This subtlety means \Cref{cor:cfl}, which requires linkages on the ambient languages~$L_1,L_2$ individually, does not apply to this example. Instead, we apply \Cref{thm:crossing} directly to~$L$. For block-counting CFLs, linkages on the intersection always hold: any pump modifying one block's count, without modifying its partner's, violates one of the intersection's constraints.
\end{remark}

\begin{example}[$\{a^n b^n c^n\}$]\label{ex:abc}
Take $L_1=\{a^n b^n w:w\in c^*\}$ and $L_2=\{w b^n c^n:w\in a^*\}$, both CFL. The block-counting arcs are $(1,2)$ from~$L_1$ and $(2,3)$ from~$L_2$, which share endpoint~$2$; accordingly, $L_1\cap L_2=\{a^n b^n c^n\}$ enforces the three-way equality $|B_1|=|B_2|=|B_3|$, which is not CFL by the standard pumping argument~\cite{hopcroft2007}. (This illustrates the shared-endpoint mechanism of \Cref{def:joint-wn}(iv) rather than crossing arcs; the full characterization is given in \Cref{thm:characterization}.)
\end{example}

\subsection{What the crossing theorem covers}

\Cref{thm:crossing} handles any pair of CFLs whose intersection admits strings with four linearly-growing segments and crossing pump-sensitive linkages, including all block-counting CFLs with crossing arcs. \Cref{thm:strengthened} (in \Cref{sec:intermediate}) relaxes the requirement from all four segments to a single inner segment growing unboundedly.

\Cref{thm:crossing} does not cover bounded-gap crossings. The condition $|P_i|\ge n$ ensures all four segments grow with~$n$. \Cref{thm:bounded-gap} proves that bounded-gap crossings ($O(1)$) are always resolvable.

%% ====================================================================
\section{Complete characterization for block-counting CFLs}\label{sec:characterization}
%% ====================================================================

\subsection{Block-counting CFLs}

\begin{definition}\label{def:block-counting}
A \emph{block-counting CFL} over disjoint sub-alphabets $\Sigma_1,\dots,\Sigma_k$ is a language of strings $w=B_1 B_2\cdots B_k$ with $B_i\in\Sigma_i^*$, subject to equality constraints $C\subseteq\binom{[k]}{2}$ on block lengths: for each $(i,j)\in C$, $|B_i|=|B_j|$. Each constraint $(i,j)\in C$ is called an \emph{arc} from block~$i$ to block~$j$.
\end{definition}

\subsection{Joint well-nestedness}

\begin{definition}\label{def:well-nested}
An arc set $C$ on $\{1,\dots,k\}$ is \emph{well-nested} if no two arcs in~$C$ cross: there are no $(i,j),(i',j')\in C$ with $i<i'<j<j'$.
\end{definition}

When $C$ is well-nested, a block-counting language with arc set~$C$ is context-free: a PDA pushes during the left-endpoint block and pops during the right-endpoint block of each arc, with the LIFO discipline ensured by well-nestedness.

\begin{definition}\label{def:joint-wn}
Two arc sets $C_1,C_2$ on $\{1,\dots,k\}$ are \emph{jointly well-nested} if: (i)~$C_1$ is well-nested; (ii)~$C_2$ is well-nested; (iii)~no arc in~$C_1$ crosses an arc in~$C_2$: there are no $(i,j)\in C_1$ and $(i',j')\in C_2$ with $i<i'<j<j'$ or $i'<i<j'<j$; and (iv)~no \emph{distinct} arcs $(i,j)\in C_1$ and $(i',j')\in C_2$ with $(i,j)\neq(i',j')$ share an endpoint (i.e., $\{i,j\}\cap\{i',j'\}=\emptyset$). Identical arcs appearing in both~$C_1$ and~$C_2$ do not violate joint well-nestedness.
\end{definition}

\subsection{The characterization theorem}

\begin{theorem}[Characterization of block-counting CFL intersection]\label{thm:characterization}
Let $L_1,L_2$ be block-counting CFLs over the same block structure $(\Sigma_1,\dots,\Sigma_k)$ with well-nested arc sets $C_1,C_2$ respectively. Then
\[
  L_1\cap L_2\textup{ is context-free}\quad\Longleftrightarrow\quad C_1\textup{ and }C_2\textup{ are jointly well-nested.}
\]
\end{theorem}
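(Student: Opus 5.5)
The plan is to prove the two directions separately. The ($\Leftarrow$) direction is a direct single-stack construction. The ($\Rightarrow$) direction goes by contraposition: from a violation of joint well-nestedness we extract a pumping obstruction. Throughout, write $C=C_1\cup C_2$. Membership in $L_1\cap L_2$ is then exactly the block-structure condition (a regular property) together with $|B_a|=|B_b|$ for every $(a,b)\in C$.

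\emph{Preliminary normalization.} Since each $L_r$ is itself context-free, I would first check that within $C_r$ no block is an endpoint of two distinct arcs. If $(a,b),(a,c)\in C_r$ were distinct, then $L_r$ would force a three-way equality. Intersecting $L_r$ with the regular language $\sigma_a^*\sigma_b^*\sigma_c^*$, for single letters $\sigma_a\in\Sigma_a$, $\sigma_b\in\Sigma_b$, $\sigma_c\in\Sigma_c$, gives a language of the shape in \Cref{ex:abc}, which is not CFL. So $C_r$ is a matching. This step needs care with the other blocks, which can be taken empty only if their partners are also empty. The cleanest route is to reuse the Ogden-style argument below applied to $L_r$ alone.

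\emph{($\Leftarrow$).} Assume $C_1,C_2$ are jointly well-nested. By (i)--(iii) no two arcs of $C$ cross. By (iv) and the normalization, two arcs of $C$ sharing an endpoint must be identical. So $C$ is a well-nested matching on $\{1,\dots,k\}$. The remark after \Cref{def:well-nested} then gives a PDA for the block-counting language with arc set $C$. This PDA pushes one symbol per letter in each left-endpoint block and pops one per letter in the matching right-endpoint block. The finite control checks the block order and that the stack is empty exactly at the end of each right-endpoint block. Well-nestedness guarantees that the symbols popped in block $b$ are exactly those pushed in its partner block $a$, because every arc opened strictly between $a$ and $b$ has been closed before $b$ begins. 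This language equals $L_1\cap L_2$.

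\emph{($\Rightarrow$), contrapositive.} Suppose joint well-nestedness fails. Since (i) and (ii) hold by hypothesis, there are two cases.

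\emph{Case (iii):} there are crossing arcs $(i,j)\in C_1$ and $(i',j')\in C_2$ with $i<i'<j<j'$.

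\emph{Case (iv):} there are distinct arcs sharing an endpoint, say $(i,j)\in C_1$ and $(j,m)\in C_2$, or a similar configuration.

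In both cases take $w_n$ with every block of length $n$. This satisfies all equality constraints, so $w_n\in L_1\cap L_2$. I would apply Ogden's lemma rather than \Cref{thm:crossing} directly. The reason is that an unrelated arc between two adjacent blocks could absorb a naive pump, which defeats the linkage hypothesis of \Cref{thm:crossing}. Mark every position in the four blocks $B_i,B_{i'},B_j,B_{j'}$ in case (iii), or the three involved blocks in case (iv).

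Let $uvxyz$ be the Ogden factorization. Then $v$ and $y$ each lie in a single block, since otherwise pumping destroys the block structure. Let $S$ be the set of blocks whose length changes. $S$ has at most two elements and contains a marked block. Membership forces $S$ to be closed under $C$-adjacency: if $a\in S$ and $(a,b)\in C$ then $b\in S$.

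In case (iii), $S$ is therefore $\{i,j\}$ or $\{i',j'\}$. In the first subcase $x$ contains all of $B_{i'}$, hence more than $p$ marked positions. In the second subcase $x$ contains all of $B_j$. Either way this contradicts Ogden's bound on marked positions in $vxy$.

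In case (iv), closure forces $S\supseteq\{i,j,m\}$, so $|S|\ge 3$, which is again a contradiction.

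I expect the main obstacle to be exactly this closure bookkeeping, namely ensuring that neighbouring arcs of $C$ cannot let a pump escape. Marking only the offending blocks and using Ogden's lemma is what I would try first to handle it.
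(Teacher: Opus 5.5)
Your argument is correct, but it departs from the paper's in two ways.

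\textbf{The ($\Rightarrow$) direction.} The paper uses only the ordinary pumping lemma. It neutralizes unrelated arcs through the choice of witness: blocks reachable from $\{i,i',j,j'\}$ via $C_1\cup C_2$ get length $n$, and all others get length $0$. This confines $vxy$ to at most two adjacent nonempty blocks. The shared-endpoint case is then handled separately by citing the non-context-freeness of a three-way equality.

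You instead keep every block at length $n$ and let Ogden's marking do that work. One observation then covers everything: the set $S$ of changed blocks has $|S|\le 2$, meets a marked block, and is closed under $C$-adjacency. This handles crossings (a full marked block $B_{i'}$ or $B_j$ is trapped inside $x$), shared endpoints, and your normalization step uniformly.

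It also makes precise two steps the paper only sketches. One is ``tracing that block's arc chain'' in the straddling case. The other is the three-way-equality step, which needs the whole connected component kept nonempty, exactly the care you flag. The paper's route has the compensating advantage of needing only the weaker lemma.

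\textbf{The ($\Leftarrow$) direction.} Your PDA is the paper's, but your preliminary normalization is a real addition, not a formality. Condition (iv) only forbids endpoint sharing \emph{between} $C_1$ and $C_2$. The paper's construction (``if $\ell$ is the left endpoint of arc $e$, push $c_e$'') therefore tacitly assumes each $C_r$ is a matching. Without invoking that each $L_r$ is context-free, the pair $C_1=\{(1,2),(2,3)\}$, $C_2=\emptyset$ would be jointly well-nested yet (over one-letter blocks) yield $\{a^nb^nc^n\}$.

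\textbf{One slip to fix.} The stack is not empty at the end of every right-endpoint block. With $C=\{(1,4),(2,3)\}$, it still holds $B_1$'s symbols after $B_3$. Use arc-indexed stack symbols $c_e$, as the paper does, and check that $c_{(a,b)}$ is exhausted exactly when $B_b$ ends.
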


\begin{proof}
\emph{($\Leftarrow$: Well-nested implies CFL.)} Suppose $C_1$ and $C_2$ are jointly well-nested. Construct a PDA~$M$ recognizing $L_1\cap L_2$. The stack alphabet is $\Gamma=\{c_e:e\in C_1\cup C_2\}\cup\{\$\}$. The PDA tracks the current block index via disjoint sub-alphabets. For each block~$B_\ell$ ($\ell\in\{1,\dots,k\}$): if~$\ell$ is the left endpoint of arc~$e=(\ell,r)\in C_1\cup C_2$, push~$c_e$ for each symbol read; if~$\ell$ is the right endpoint of arc~$e=(s,\ell)$, pop and verify~$c_e$; if~$\ell$ is free (not an endpoint of any arc), read without stack operations. Accept when only~$\$$ remains.

Joint well-nestedness ensures LIFO compatibility: when reaching the closing block~$B_\ell$ of arc~$e=(s,\ell)$, all arcs opened after~$e$ are either nested inside (closed before~$B_\ell$) or disjoint and later (not yet opened). No arc crosses~$e$ or shares its endpoint. Therefore $c_e$ markers are on top of the stack when needed.

\emph{($\Rightarrow$: Non-well-nested implies non-CFL.)} If joint well-nestedness fails, then either (a)~crossing arcs exist, or (b)~distinct arcs share an endpoint.

For~(a): say $(i,j)\in C_1$ and $(i',j')\in C_2$ cross; without loss of generality (relabeling $C_1,C_2$ if necessary), $i<i'<j<j'$. Assume for contradiction that $L=L_1\cap L_2$ is CFL with pumping length~$p$. Choose $n>p$. Call a block index~$\ell$ \emph{crossing-connected} if it is reachable from~$\{i,i',j,j'\}$ via arcs in~$C_1\cup C_2$ (treating arcs as undirected edges on block indices). Let $w\in L$ have $|B_\ell|=n$ for every crossing-connected block~$\ell$ and $|B_\ell|=0$ for all other blocks. All constraints are satisfied: constraints between two crossing-connected blocks have $n=n$; constraints between two non-crossing-connected blocks have $0=0$; and no arc connects a crossing-connected block to a non-crossing-connected one (by definition of reachability). The pumping lemma gives a factorization $uvxyz$ with $|vxy|\le p$ and $|vy|\ge 1$.

Since crossing-connected blocks have size $n>p$ and all other blocks are empty, $vxy$ lies within at most two adjacent crossing-connected blocks. Every such block is connected via a chain of arcs to some crossing endpoint~$\alpha\in\{i,i',j,j'\}$, whose crossing partner~$\beta$ satisfies $|\alpha-\beta|\ge 2$ (since $i<i'<j<j'$ with $i'$ between~$i$ and~$j$, and $j$ between~$i'$ and~$j'$); consequently $\beta$ is separated from~$\alpha$ by at least one intervening block of size~$n$, placing it at distance~$>p$ in the string. Since $|vxy|\le p$, any factorization touching a crossing-connected block cannot reach that block's crossing partner. If $vxy$ lies entirely within a single block~$B_\ell$, pumping changes~$|B_\ell|$ while the partner block is unreachable, violating a constraint. If $vxy$ straddles two adjacent blocks $B_\ell$ and~$B_{\ell+1}$, pumping changes at least one block's length; tracing that block's arc chain to a crossing endpoint yields a distant partner whose length remains~$n$, again violating a constraint. So $L$ is not CFL.

For~(b): distinct arcs from~$C_1$ and~$C_2$ share an endpoint: say $(i,j)\in C_1$ and $(i,j')\in C_2$ with $j\neq j'$. Then $L_1\cap L_2$ enforces $|B_i|=|B_j|$ and $|B_i|=|B_{j'}|$, i.e., $|B_i|=|B_j|=|B_{j'}|$, a three-way equality over disjoint alphabets, which is not CFL by a standard pumping argument (cf.~\cite{hopcroft2007}). (If $C_1$ and $C_2$ share an identical arc $(i,j)$, no new constraint is added, and the shared arc does not obstruct CFL-ness.)
\end{proof}

\subsection{Discussion}

\Cref{thm:characterization} provides a complete characterization for block-counting CFLs. \Cref{thm:strengthened} (in \Cref{sec:intermediate}) recovers this characterization as a consequence of the inner-segment-based analysis, since all crossings in the block-counting setting have growing inner segments with pump-sensitive linkages on~$L_1\cap L_2$ (cf.\ \Cref{rem:linkage-ambient}). Extending to general CFLs requires two ingredients: a grammar-independent definition of ``matching arcs,'' and a proof that non-CFL intersections always produce pump-sensitive linkages on some crossing pair. The CFL direction of the inner segment dichotomy (\Cref{thm:buffer}) is unconditional; the non-CFL direction (\Cref{thm:strengthened}) is conditional on the linkage assumption. The connection to well-nestedness in the LCFRS hierarchy is discussed in \Cref{sec:related}.

%% ====================================================================
\section{Bounded-gap crossings are always resolvable}\label{sec:bounded-gap}
%% ====================================================================

\subsection{Definitions}

\textbf{PDA normal form convention.} Throughout this section and \Cref{sec:intermediate}, we assume PDAs are in a normal form where each transition reads exactly one input symbol and performs at most one stack operation (push of a single symbol, or pop). Any PDA can be converted to this form: first convert the CFG to Greibach Normal Form (GNF), then construct the standard GNF-to-PDA translation. The resulting PDA may push up to two symbols per step (from binary productions); we decompose each such step into sequential single-push transitions via auxiliary states that perform $\epsilon$-moves.

The auxiliary states increase $|Q|$ by a grammar-dependent constant factor but do not read input symbols, so each \emph{input position} is associated with at most two pushes and one pop of a single symbol. Buffer bounds depending on arc span (measured in input positions) therefore correctly limit the number of simultaneously pending operations, up to the constant factor of~$2$.

Recall from \Cref{sec:pda} that $\mu_M(w)$ denotes the push-pop matching of PDA~$M$ on string~$w$.

\begin{definition}[Crossing gap]\label{def:gap}
For PDAs $M_1,M_2$ and string $w\in L(M_1)\cap L(M_2)$, a \emph{crossing pair} consists of arcs $(i,j)$ and $(i',j')$ from different machines (i.e., one from~$\mu_{M_1}(w)$ and the other from~$\mu_{M_2}(w)$) with $i<i'<j<j'$. The \emph{crossing gap} of this pair is $\max(i'-i,\;j'-j)$.
\end{definition}

\begin{definition}[Bounded crossing gap]\label{def:bounded-gap}
PDAs $M_1,M_2$ have \emph{$k$-bounded crossing gap} if for every $w\in L(M_1)\cap L(M_2)$, every crossing pair has crossing gap at most~$k$.
\end{definition}

Note that the gap must be bounded on \emph{both} sides for this construction. An arc pair with left endpoints close together ($i'-i = O(1)$) and inner overlap small ($j-i' = O(1)$) but right endpoints far apart ($j'-j = \omega(1)$) has unbounded crossing gap and cannot be resolved by \Cref{thm:bounded-gap} alone; however, the inner segment measure $\max(i'-i,\;j-i')$ remains bounded, and \Cref{thm:buffer} shows that such crossings are resolvable.

\subsection{Well-nestedness of non-crossing arc sets}

The displacement buffer construction below pushes all entries directly to the stack and, at pop time, temporarily displaces at most~$2k$ intervening entries from the other machine. The following elementary lemma, used later also in \Cref{thm:buffer}, justifies that non-crossing arcs from two machines can share a single stack.

\begin{lemma}[LIFO compatibility of non-crossing arcs]\label{lem:lifo}
Let $S_1$ and $S_2$ be well-nested arc sets (no two arcs within the same set cross). If no arc in~$S_1$ crosses any arc in~$S_2$, then $S_1\cup S_2$ is well-nested.
\end{lemma}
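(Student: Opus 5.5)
This follows directly from the definitions; no earlier result is needed. I would argue by contradiction: suppose $S_1\cup S_2$ is not well-nested. By \Cref{def:well-nested}, there are then two arcs $a=(i,j)$ and $b=(i',j')$ in $S_1\cup S_2$ with $i<i'<j<j'$. The plan is to classify the pair by where each arc comes from, and show that every case contradicts a hypothesis.

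There are four possibilities. If $a,b\in S_1$, this contradicts the well-nestedness of $S_1$. If $a,b\in S_2$, it contradicts the well-nestedness of $S_2$. If $a\in S_1$ and $b\in S_2$, then $a$ and $b$ cross in the sense of \Cref{sec:pda}, against the hypothesis that no arc of $S_1$ crosses an arc of $S_2$. If $a\in S_2$ and $b\in S_1$, the same hypothesis applies, because the crossing relation $i<i'<j<j'$ or $i'<i<j'<j$ is symmetric in the two arcs. An arc lying in both sets is covered by whichever assignment we choose.

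The only point that needs care is this symmetric last case. The relabeling convention of \Cref{sec:pda} (always writing crossings as $i<i'<j<j'$) must not be read as a restriction on which machine supplies the left-starting arc. The hypothesis has to be stated in the symmetric form, as in \Cref{def:joint-wn}(iii), so that it rules out crossings with either set's arc starting first.

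I would also add a short remark on what the lemma does not cover. Arcs that share an endpoint are not crossings under this definition, so the lemma does not address them. For stack compatibility this is harmless when each position carries at most two pushes and one pop and they are ordered consistently. I would not belabor it, since \Cref{def:well-nested} only forbids strict interleaving.
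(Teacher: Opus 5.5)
Your proof is correct and is essentially the paper's own argument: assume two arcs of $S_1\cup S_2$ interleave, then split on which sets they come from; same-set pairs contradict internal well-nestedness, and mixed pairs contradict the no-crossing hypothesis, which is already symmetric by the definition of crossing in \Cref{sec:pda}, as you note. Your remark on shared endpoints lies outside the lemma, which concerns only strict interleaving, so it is optional.
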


\begin{proof}
Suppose for contradiction that $(a,b)\in S_1\cup S_2$ and $(c,d)\in S_1\cup S_2$ cross, i.e., $a<c<b<d$. If both belong to~$S_1$ (or both to~$S_2$), this contradicts internal well-nestedness. If one belongs to~$S_1$ and the other to~$S_2$, this contradicts the no-crossing hypothesis. Therefore $S_1\cup S_2$ is well-nested.
\end{proof}

Since any well-nested arc set admits a LIFO ordering of its push/pop operations, \Cref{lem:lifo} guarantees that all non-crossing arcs from both machines can be processed by a single stack without reordering.

\subsection{The bounded-gap resolution theorem}

\begin{theorem}[Bounded-gap resolution]\label{thm:bounded-gap}
If PDAs $M_1,M_2$ have $k$-bounded crossing gap, then $L(M_1)\cap L(M_2)$ is context-free.
\end{theorem}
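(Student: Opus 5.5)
We will build a single nondeterministic PDA $M$ that simulates $M_1$ and $M_2$ in lockstep on one shared stack. It guesses accepting computations of both machines, and hence their matchings $\mu_{M_1}(w)$ and $\mu_{M_2}(w)$. We interleave their stack operations in input order, tagging every stack symbol with the machine that pushed it. This is the ``displacement buffer'' construction announced before \Cref{lem:lifo}. The finite control holds the two simulated states, plus a buffer of at most $2k$ (up to the factor $2$ from the normal-form convention) tagged stack symbols that have been temporarily lifted off the stack.

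The key steps are as follows.

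\emph{First}, split the arcs of $\mu_{M_1}(w)\cup\mu_{M_2}(w)$ into non-crossing arcs and arcs taking part in some crossing pair. By \Cref{lem:lifo} the non-crossing arcs form a well-nested set, so they behave LIFO on the shared stack.

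\emph{Second}, consider a pop by $M_1$ of an arc $(i,j)$ whose symbol is not on top. Every symbol above it is then an $M_2$-symbol pushed at some position $i'\in(i,j)$ and still pending at $j$; otherwise an $M_1$ arc would be nested wrongly, contradicting $M_1$'s own LIFO discipline. Each such arc $(i',j')$ has $j'>j$, so it crosses $(i,j)$. By $k$-bounded gap, $i'-i\le k$, so there are at most $2k$ such symbols.

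\emph{Third}, the PDA handles such a pop in three moves. It pops these $\le 2k$ foreign symbols into the buffer via $\epsilon$-moves. It then pops and checks the $M_1$ symbol. Finally it pushes the buffered symbols back in their original order. The symmetric procedure applies for $M_2$ pops. Because the relative order of the buffered symbols is restored, each machine's own view of the stack is exactly the one its simulated computation expects, by induction on input positions.

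\emph{Fourth}, $M$ accepts when both simulated states are accepting and the stack is at the bottom marker. Correctness follows because $M$ accepts exactly when there exist accepting runs of both machines on $w$.

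The main obstacle is the second step. We must ensure that the displaced $M_2$ symbols are bounded in number, and moreover that nothing of $M_1$'s own lies interleaved below them, so that a bounded lift always reaches the target. The planned bound uses $i'-i\le k$ for every intervening crossing partner. We will double-check that the pop-side bound $j'-j\le k$ is not also needed here; it is presumably needed for the symmetric case where $M_2$ pops and an $M_1$ symbol is buried.

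There is a further subtlety: the bound in \Cref{def:bounded-gap} applies only to the runs witnessing membership, while $M$ must not accept strings via runs that violate it. We will make $M$ reject any run where more than $2k$ symbols would need lifting, that is, if the buffer overflows. This is sound, since $M$ then accepts only genuine joint runs. It is complete, since runs satisfying the gap bound never overflow.

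Thus $L(M)=L(M_1)\cap L(M_2)$, with a buffer of size independent of $|w|$.
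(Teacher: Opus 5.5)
Your proposal is correct and is essentially the paper's own proof: owner-tagged pushes go straight to the shared stack. At pop time, at most $2k$ foreign entries, all from arcs crossing the popped arc with push-side distance $i'-i\le k$, are lifted into a finite buffer and restored in order; buffer overflow causes rejection, which gives soundness, while accepting runs that satisfy the gap bound never overflow. To settle your hedge: when $M_2$ pops $(i',j')$, each buried $M_1$ arc $(a,b)$ satisfies $i'<a<j'<b$, so it forms a crossing pair in which $M_2$'s arc is the left-starting one, the same push-side bound $a-i'\le k$ applies, and the pop-side bound $j'-j\le k$ is never needed (your first step via \Cref{lem:lifo} is likewise not needed).
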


\begin{proof}
We construct a nondeterministic PDA~$M$ recognizing $L(M_1)\cap L(M_2)$. The key idea is to push all symbols directly to the stack with owner tags, and use a bounded \emph{displacement buffer} at pop time to skip over at most~$2k$ entries from the other machine.

\emph{The construction.} $M$'s finite control comprises $(q_1,q_2,\beta)$ where $q_j\in Q_j$ is the current state of~$M_j$ and $\beta$~is a displacement buffer holding at most~$2k$ tagged symbols. The stack alphabet consists of tagged pairs $(j,\gamma)$ with $j\in\{1,2\}$ and $\gamma\in\Gamma_1\cup\Gamma_2$.

At each input position, $M$ simulates both~$M_1$ and~$M_2$ on their respective state components:
\begin{itemize}
\item \emph{Pushes:} When $M_j$ pushes~$\gamma$, push $(j,\gamma)$ directly onto~$M$'s stack.
\item \emph{Pops:} When $M_j$ wants to pop expected symbol~$\gamma$: if the stack top is $(j,\gamma)$, pop it. Otherwise, pop entries from the stack into the displacement buffer~$\beta$ (at most~$2k$ entries, all with the wrong owner tag) until $(j,\gamma)$ is found. Pop $(j,\gamma)$, then push all entries from~$\beta$ back to the stack in their original order. If $(j,\gamma)$ is not found within~$2k$ entries, this computation path rejects.
\end{itemize}

\Cref{fig:buffer} illustrates the displacement mechanism.

\begin{figure}[ht]
\centering
\resizebox{\textwidth}{!}{%
\begin{tikzpicture}[>=stealth,
  cell/.style={minimum width=50pt, minimum height=18pt, inner sep=0pt,
               font=\scriptsize, draw, thick},
  c1/.style={cell, fill=blue!10, draw=blue!40},
  c2/.style={cell, fill=red!8, draw=red!40},
  ct/.style={cell, fill=green!12, draw=green!60!black, very thick}]

  %% ============ LEFT PANEL ============
  \node[font=\small\bfseries] at (-1.5,5.9) {Before $M_1$ pops};

  % Stack entries (bottom to top, anchored)
  \node[c1] (L1) at (-1.5,0.0) {$(1,\alpha)$};
  \node[c2] (L2) at (-1.5,0.9) {$(2,\delta)$};
  \node[ct] (L3) at (-1.5,1.8) {$(1,\gamma)$};
  \node[c2] (L4) at (-1.5,2.7) {$(2,\epsilon)$};
  \node[c2] (L5) at (-1.5,3.6) {$(2,\zeta)$};
  \node[c2] (L6) at (-1.5,4.5) {$(2,\theta)$};

  % Labels — right side only, close to boxes
  \node[font=\scriptsize, anchor=west] at (-0.55,4.5) {$\longleftarrow$ top};
  \node[font=\scriptsize, green!60!black, anchor=west] at (-0.55,1.8) {$\longleftarrow$ target};

  % Bracket: simple vertical line + ticks for M2 group
  \draw[red!50, very thick] (-3.1,2.25) -- (-3.1,4.95);
  \draw[red!50, very thick] (-3.1,2.25) -- (-2.9,2.25);
  \draw[red!50, very thick] (-3.1,4.95) -- (-2.9,4.95);
  \node[font=\scriptsize, red!60, anchor=east] at (-3.25,3.6)
    {$\le 2k$ $M_2$ entries};

  %% ============ ARROW ============
  \draw[->, very thick, gray!50] (1.8,2.3) -- node[above,font=\footnotesize]{displace}
    node[below,font=\footnotesize]{and pop} (3.8,2.3);

  %% ============ RIGHT PANEL ============
  \node[font=\small\bfseries] at (6.5,5.9) {After displacement};

  % Remaining stack
  \node[c1] (R1) at (6.5,0.0) {$(1,\alpha)$};
  \node[c2] (R2) at (6.5,0.9) {$(2,\delta)$};
  \node[font=\scriptsize, anchor=west] at (7.45,0.9) {$\longleftarrow$ top};

  % --- Buffer label above ---
  \node[font=\footnotesize, orange!80!black] at (6.5,5.1) {buffer $\beta$ (finite control)};

  % Buffer — horizontal row
  \node[c2] (B1) at (5.0,4.2) {$(2,\theta)$};
  \node[c2] (B2) at (6.5,4.2) {$(2,\zeta)$};
  \node[c2] (B3) at (8.0,4.2) {$(2,\epsilon)$};

  % Buffer outline
  \draw[orange!70, very thick, rounded corners=3pt]
    ([xshift=-6pt,yshift=-6pt]B1.south west) rectangle ([xshift=6pt,yshift=6pt]B3.north east);

  % Popped entry
  \node[ct] (P) at (10.5,4.2) {$(1,\gamma)$};
  \draw[->, green!60!black, thick] (8.8,4.2) -- (P.west);
  \node[font=\scriptsize, green!60!black] at (10.5,3.5) {popped};

  % Restore arrow
  \draw[->, orange!60, thick] (6.5,3.5) -- (6.5,1.6);
  \node[font=\scriptsize, orange!70!black] at (7.3,2.6) {restore};

\end{tikzpicture}%
}% end resizebox
\caption{The displacement buffer mechanism (\Cref{thm:bounded-gap}). \emph{Left:} the combined stack interleaves entries from $M_1$ (blue) and $M_2$ (red). When $M_1$ pops, its target $(1,\gamma)$ (green) has at most~$2k$ crossing $M_2$ entries above it. \emph{Right:} the $M_2$ entries are displaced into the buffer~$\beta$ (finite control), the target is popped, and the displaced entries are restored.}
\label{fig:buffer}
\end{figure}
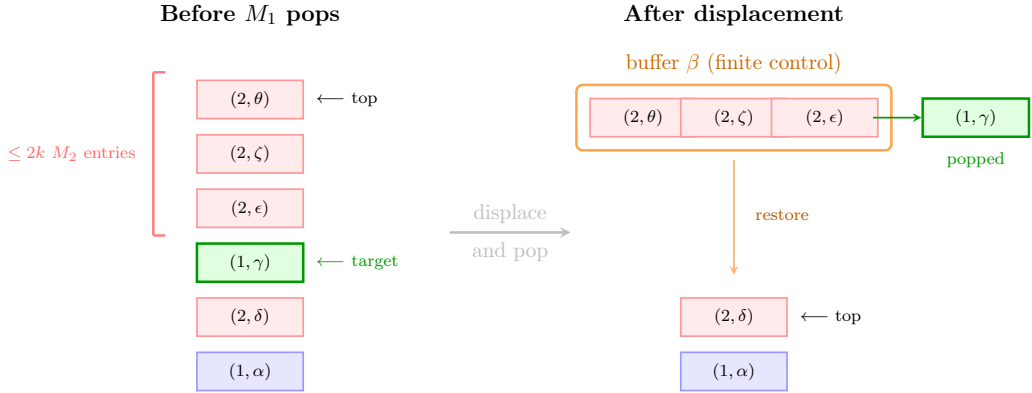

\emph{Displacement buffer bound.} Fix $w\in L(M_1)\cap L(M_2)$ and consider the accepting computation. We claim that when machine~$M_\ell$ ($\ell\in\{1,2\}$) pops arc $(i,j_{\mathrm{pop}})$, at most~$2k$ entries from the other machine~$M_{3-\ell}$ lie above~$M_\ell$'s target entry on the stack.

At time~$j_{\mathrm{pop}}$, an entry from~$M_{3-\ell}$ is above~$M_\ell$'s entry from position~$i$ iff it was pushed after position~$i$ and has not yet been popped. Such an $M_{3-\ell}$ arc~$(i',j')$ satisfies $i<i'$ and $j'>j_{\mathrm{pop}}$, which means $i<i'<j_{\mathrm{pop}}<j'$, that is, it \emph{crosses}~$(i,j_{\mathrm{pop}})$. By the $k$-bounded crossing gap hypothesis, $i'-i\le k$. Since each position contributes at most two pushes per machine (normal form), there are at most~$2k$ such entries.

(Entries from~$M_{3-\ell}$ arcs nested inside~$(i,j_{\mathrm{pop}})$, with $i<i'<j'<j_{\mathrm{pop}}$, have already been popped before time~$j_{\mathrm{pop}}$. Entries from~$M_{3-\ell}$ arcs pushed before~$i$ are below~$M_\ell$'s target. Entries from arcs not yet pushed are not on the stack.)

\emph{Stack order preservation.} The displacement operation pops~$\le 2k$ entries, removes the target, and pushes the displaced entries back in their original order. This preserves the relative ordering of all remaining stack entries.

\emph{Completeness.} For $w\in L(M_1)\cap L(M_2)$, the displacement buffer never exceeds~$2k$ entries, so no computation path rejects due to buffer overflow. Both simulated machines find their expected symbols and reach accepting states.

\emph{Soundness.} If $M$ accepts~$w$ on any nondeterministic path, both simulated machines accepted~$w$ (every pop verified the correct symbol), so $w\in L(M_1)\cap L(M_2)$.

Therefore $L(M) = L(M_1)\cap L(M_2)$ is context-free, with $|Q_1|\cdot|Q_2|\cdot(|\Gamma_1|{+}|\Gamma_2|{+}1)^{2k}$ states.
\end{proof}

\subsection{Discussion}

\Cref{thm:bounded-gap} resolves the bounded-gap case of the crossing-arc characterization: all pushes go directly to the stack, and at pop time at most~$2k$ entries from the other machine are displaced and restored. Combined with \Cref{thm:crossing}, the two resolved cases are $O(1)$ gap (CFL) and $\Theta(n)$ gap with crossing linkages (not CFL). In the next section, we introduce the \emph{inner segment measure} $\max(|P_2|,|P_3|)$ and show it is the quantity governing context-freeness, narrowing the open regime.

The interleaved palindrome (\Cref{ex:palindrome}) has crossing gap~1. \Cref{thm:bounded-gap} with $k=1$ constructs a PDA with displacement buffer of size~$2$: at each pop, at most two entries from the other machine must be displaced. The pair-alphabet construction of \Cref{ex:palindrome} is a more direct approach exploiting the specific structure of the interleaved palindrome language.

%% ====================================================================
\section{The inner segment dichotomy}\label{sec:intermediate}
%% ====================================================================

\Cref{thm:crossing} establishes that crossing pump-sensitive linkages on four segments of size~$\Omega(n)$ imply non-CFL. \Cref{thm:bounded-gap} establishes that crossing gap~$O(1)$ always admits a PDA construction. We now narrow the intermediate regime by proving a strengthened crossing theorem that replaces the ``all four segments large'' condition with a requirement on only the two \emph{inner} segments, and requires only \emph{one} of the two inner segments to be large.

\subsection{The strengthened crossing theorem}

\begin{theorem}[Strengthened crossing linkage theorem]\label{thm:strengthened}
Let $L\subseteq\Sigma^*$ contain, for infinitely many~$n$, a string $w_n = P_1 P_2 P_3 P_4$ satisfying:
\begin{enumerate}
\item[\textup{(i)}] $L$ has a pump-sensitive linkage $(P_1,P_3)$;
\item[\textup{(ii)}] $L$ has a pump-sensitive linkage $(P_2,P_4)$;
\item[\textup{(iii)}] $|P_2|\ge 1$, $|P_3|\ge 1$, and $\max(|P_2|,|P_3|)\ge n$.
\end{enumerate}
Then $L$ is not context-free.
\end{theorem}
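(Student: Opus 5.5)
I plan to follow the proof of \Cref{thm:crossing}: assume $L$ is context-free with pumping length~$p$, choose $n>p$ with a witness $w_n=P_1P_2P_3P_4$, and take the factorization $w_n=uvxyz$ with $|vxy|\le p$ and $|vy|\ge 1$. The goal is to show that $vxy$ must touch exactly one member of one of the linked pairs $(P_1,P_3)$ or $(P_2,P_4)$. Then \Cref{def:linkage} gives $uv^2xy^2z\notin L$, which is a contradiction. Since the outer segments $P_1,P_4$ and one inner segment may now be short, I cannot use the argument that ``$vxy$ meets at most two adjacent segments.'' Instead I will reason directly about which of the four segments $vxy$ intersects. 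Because $vxy$ is a contiguous substring, the set of indices it meets is an interval $\{a,\dots,b\}\subseteq\{1,2,3,4\}$.

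I will go through the possible intervals. The singletons $\{1\},\{2\},\{3\},\{4\}$ each meet one member of a pair but not its partner, so the linkage applies. For the intervals $\{1,2\}$ and $\{3,4\}$, use the pair $(P_1,P_3)$: the first meets $P_1$ and avoids $P_3$, and the second meets $P_3$ and avoids $P_1$. For $\{2,3\}$, use the pair $(P_2,P_4)$, since the interval meets $P_2$ and avoids $P_4$. For $\{1,2,3\}$, use $(P_2,P_4)$: it meets $P_2$ and avoids $P_4$. For $\{2,3,4\}$, use $(P_1,P_3)$: it meets $P_3$ and avoids $P_1$. So every interval except $\{1,2,3,4\}$ produces a contradiction.

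The only remaining case is that $vxy$ meets all four segments. Then $vxy$ contains all of $P_2$ and $P_3$, so $|vxy|\ge |P_2|+|P_3|\ge \max(|P_2|,|P_3|)\ge n>p$. This contradicts $|vxy|\le p$. Hypothesis~(iii) is used here, and its lower bounds $|P_2|,|P_3|\ge 1$ ensure that ``meeting $P_2$'' or ``meeting $P_3$'' is a nonvacuous condition. If the outer segments $P_1$ or $P_4$ are empty, the corresponding intervals cannot occur, and the remaining cases are unaffected.

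The main obstacle is the degenerate bookkeeping. I must verify that the linkage definition really applies when some segment is empty or when $vxy$ meets a segment only in $x$ rather than in $vy$. The definition is stated in terms of $vxy$ intersecting a segment, so this is literally satisfied. I also need to make sure the interval enumeration is exhaustive. I expect this check, rather than any new idea, to be the delicate step. The essential new point compared with \Cref{thm:crossing} is that only the four-segment interval needs a length argument, and one long inner segment is enough to rule it out.
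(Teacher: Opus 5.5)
Your proof is correct, but it organizes the case analysis differently from the paper.

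\textbf{The paper's route.} It follows the template of \Cref{thm:crossing}. It splits into Case~A ($|P_2|>p$) and Case~B ($|P_3|>p$) and enumerates seven positional cases for $vxy$ in each. It then subdivides according to whether the \emph{other} inner segment is shorter than~$p$, tracking how far $vxy$ can physically reach.

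\textbf{Your route.} You classify $vxy$ by the interval of segment indices it meets. The two linkages alone dispose of nine of the ten possible intervals, with no length consideration at all. The length hypothesis is needed only to exclude $\{1,2,3,4\}$: there $vxy$ covers all of $P_2$ and $P_3$, so $|vxy|\ge|P_2|+|P_3|>p$.

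\textbf{What your route buys.} It removes the A/B split and the sub-cases depending on $p$. It makes transparent that $|P_1|$ and $|P_4|$ are irrelevant. It also shows the length hypothesis could equivalently be stated as $|P_2|+|P_3|\ge n$. The three-segment straddles discussed in \Cref{rem:straddle} appear simply as the intervals $\{1,2,3\}$ (resolved by $(P_2,P_4)$) and $\{2,3,4\}$ (resolved by $(P_1,P_3)$). The paper's version keeps the visual seven-location picture of \Cref{fig:pumping}, at the cost of more bookkeeping.

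\textbf{One sharpening.} The condition $|P_2|,|P_3|\ge 1$ is exactly what makes your ``interval'' claim true, since only the end segments $P_1$ and $P_4$ may be empty. If $P_2$ were empty, $vxy$ could meet $P_1$ and $P_3$ without meeting $P_2$, and neither linkage would fire. So this part of (iii) enters in the interval claim, not in the four-segment case, and that is where you should say it is used.
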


Note the difference from \Cref{thm:crossing}: condition~(iii) requires only that the \emph{larger} of the two inner segments $P_2$ and~$P_3$ grows, while $P_1$, $P_4$, and the \emph{smaller} inner segment may be of any size, including~$O(1)$ (but not zero: the requirement $|P_2|\ge 1$, $|P_3|\ge 1$ ensures both linkages are non-vacuous; a linkage on an empty segment is always satisfied vacuously and provides no constraint). In the crossing-arc application, $|P_2|=i'-i\ge 1$ and $|P_3|=j-i'\ge 1$ follow from $i<i'<j$.

\begin{proof}
Assume for contradiction that $L$ is context-free with pumping length~$p$. Choose $n>p$ and consider $w_n = P_1 P_2 P_3 P_4$ with $\max(|P_2|,|P_3|)\ge n > p$.

By the pumping lemma, there exists a factorization $w_n = uvxyz$ with $|vxy|\le p$, $|vy|\ge 1$, and $uv^ixy^iz\in L$ for all $i\ge 0$. We consider two cases.

\medskip
\emph{Case A: $|P_2|\ge n > p$.} We enumerate all possible locations of~$vxy$:

\emph{Case~1: $vxy\subseteq P_1$.} Disjoint from~$P_3$ (separated by~$P_2$ of length $> p$). By~(i), contradiction.

\emph{Case~2: $vxy$ straddles the $P_1 P_2$ boundary.} Since $|vxy|\le p < |P_2|$, $vxy$ cannot reach past~$P_2$, so it is disjoint from~$P_3$ and~$P_4$. Intersects~$P_1$ but not~$P_3$; by~(i), contradiction.

\emph{Case~3: $vxy\subseteq P_2$.} Disjoint from~$P_4$ (separated by~$P_3$). By~(ii), contradiction.

\emph{Case~4: $vxy$ straddles the $P_2 P_3$ boundary.} Since $|vxy|\le p < |P_2|$, the start of~$vxy$ lies strictly inside~$P_2$, so $vxy$ is disjoint from~$P_1$. Two sub-cases: if $|P_3| \ge p$, then the end of~$vxy$ lies strictly inside~$P_3$, so $vxy$ is disjoint from~$P_4$; it intersects~$P_2$ but not~$P_4$, and (ii) gives a contradiction. If $|P_3| < p$, then $vxy$ may extend through~$P_3$ into~$P_4$; but it still intersects~$P_3$ while remaining disjoint from~$P_1$ (since $|P_2| > p$), so (i) gives a contradiction.

\emph{Case~5: $vxy\subseteq P_3$.} Disjoint from~$P_1$ (separated by~$P_2$ of length $> p$). By~(i), contradiction.

\emph{Case~6: $vxy$ straddles the $P_3 P_4$ boundary.} Disjoint from~$P_1$ (since $|P_2| > p$ separates $P_1$ from~$P_3$). Intersects~$P_3$ but not~$P_1$; by~(i), contradiction.

\emph{Case~7: $vxy\subseteq P_4$.} Disjoint from~$P_2$ (separated by~$P_3$). By~(ii), contradiction.

\medskip
\emph{Case B: $|P_3|\ge n > p$ (and $|P_2|$ may be small).} Cases~3 and~7 carry over from Case~A with~$P_3$ (of length~$>p$) providing the separation between~$P_2$ and~$P_4$. Cases~5 and~6 carry over because $vxy$ starts inside~$P_3$, hence is disjoint from~$P_1$ regardless of~$|P_2|$; by~(i), contradiction. The cases that differ are~1,~2, and~4, where a short~$P_2$ may let $vxy$ straddle across multiple segments:

\emph{Case~1: $vxy\subseteq P_1$.} Entirely within~$P_1$, hence disjoint from~$P_3$. By~(i), contradiction.

\emph{Case~2: $vxy$ straddles the $P_1 P_2$ boundary.} If $|P_2| \ge p$, the argument from Case~A applies. If $|P_2| < p$, then $vxy$ may extend through all of~$P_2$ into~$P_3$. Since $|P_3|>p\ge|vxy|$, $vxy$ cannot reach~$P_4$. Hence $vxy$ intersects~$P_2$ but is disjoint from~$P_4$; by~(ii), contradiction.

\emph{Case~4: $vxy$ straddles the $P_2 P_3$ boundary.} Since $|P_3|>p\ge|vxy|$, $vxy$ is disjoint from~$P_4$. If $vxy$ is also disjoint from~$P_1$, it intersects~$P_2$ but not~$P_4$; by~(ii), contradiction. If $|P_2|<p$ and $vxy$ extends back into~$P_1$, then $vxy$ straddles $P_1 P_2 P_3$ but is still disjoint from~$P_4$; it intersects~$P_2$ but not~$P_4$, so~(ii) again gives a contradiction.

\medskip
In both cases, all locations of~$vxy$ yield contradictions, so $L$ is not context-free.
\end{proof}

\begin{remark}[Three-segment straddles and the universality of linkages]\label{rem:straddle}
A subtle point arises when one inner segment (say~$P_2$) is small: a pumping-lemma factorization $vxy$ can straddle from~$P_1$ through~$P_2$ into~$P_3$. In this configuration, $vxy$ intersects \emph{both}~$P_1$ and~$P_3$, so linkage~(i) does not apply. However, $vxy$ also intersects~$P_2$ while remaining disjoint from~$P_4$ (since the other inner segment~$P_3$ is large), so linkage~(ii) resolves the case. This is why the universal quantification in \Cref{def:linkage} is essential: the linkage must hold for \emph{all} factorizations intersecting one linked segment, including cross-segment factorizations that also touch other segments. A language like $\{a^n b c^n d\}$, where pumping $a^k b c^k$ preserves membership, does \emph{not} satisfy linkage~$(P_2,P_4)$ under this universal definition, even though pumping the singleton~$b$ alone does break membership. Consequently, such languages do not satisfy the theorem's hypotheses.
\end{remark}

\subsection{Application to CFL intersection}

\begin{corollary}\label{cor:strengthened-cfl}
Let $L_1,L_2$ be context-free languages. Suppose there exist, for infinitely many~$n$, a string $w_n\in L_1\cap L_2$ and a factorization $w_n = P_1 P_2 P_3 P_4$ with $|P_2|\ge 1$, $|P_3|\ge 1$, and $\max(|P_2|,|P_3|)\ge n$, such that $L_1$ has a pump-sensitive linkage $(P_1,P_3)$ and $L_2$ has a pump-sensitive linkage $(P_2,P_4)$. Then $L_1\cap L_2$ is not context-free.
\end{corollary}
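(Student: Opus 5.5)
The plan is to reduce the statement to \Cref{thm:strengthened}, in the same way \Cref{cor:cfl} was reduced to \Cref{thm:crossing}. Set $L=L_1\cap L_2$. For each of the infinitely many~$n$ in the hypothesis we already have $w_n\in L$ and a factorization $w_n=P_1P_2P_3P_4$. This factorization satisfies condition~(iii) of \Cref{thm:strengthened} verbatim: $|P_2|\ge 1$, $|P_3|\ge 1$, and $\max(|P_2|,|P_3|)\ge n$. So it remains to produce conditions~(i) and~(ii), namely pump-sensitive linkages $(P_1,P_3)$ and $(P_2,P_4)$ in~$L$ on~$w_n$.

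The key step is that pump-sensitive linkage passes from a language to any sublanguage containing the string. Take the linkage $(P_1,P_3)$ in~$L_1$. Fix any factorization $w_n=uvxyz$ with $|vy|\ge 1$ such that $vxy$ meets exactly one of $P_1$, $P_3$. This condition refers only to the string $w_n$ and its segments, not to the ambient language. Hence the same factorization is admissible for the linkage in~$L_1$, and \Cref{def:linkage} gives $uv^2xy^2z\notin L_1$. Since $L\subseteq L_1$, we get $uv^2xy^2z\notin L$. This is exactly the linkage $(P_1,P_3)$ in~$L$. The same argument with $L\subseteq L_2$ transfers $(P_2,P_4)$ from~$L_2$ to~$L$.

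With (i)--(iii) established for infinitely many~$n$, \Cref{thm:strengthened} yields that $L=L_1\cap L_2$ is not context-free. Context-freeness of $L_1,L_2$ plays no role beyond the setting of the statement.

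The only point needing care, which I would state explicitly, is that \Cref{def:linkage} quantifies over factorizations of the fixed string~$w_n$, independently of the language. Consequently the universal quantifier is preserved when passing to a subset, and no new factorizations arise in~$L$ that were not already controlled in~$L_1$ or~$L_2$. As \Cref{rem:linkage-ambient} notes, the converse direction fails, since linkage in the intersection need not lift to the factors. That does not matter here, because only the downward inheritance is used.
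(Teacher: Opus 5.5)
Your proposal is correct and matches the paper's proof: pump-sensitive linkages pass from $L_1$ and $L_2$ to the subset $L_1\cap L_2$, and then \Cref{thm:strengthened} applies directly. The paper states the inheritance step in one line. You spell it out, noting that the admissible factorizations depend only on $w_n$ and its segments, and that $uv^2xy^2z\notin L_i$ implies $uv^2xy^2z\notin L_1\cap L_2$.
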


\begin{proof}
Pump-sensitive linkages are inherited by $L_1\cap L_2\subseteq L_i$. Apply \Cref{thm:strengthened}.
\end{proof}

\subsection{Interpretation in terms of crossing geometry}\label{sec:crossing-geometry}

In the crossing-gap setup, arcs $(i,j)\in\mu_{M_1}(w)$ and $(i',j')\in\mu_{M_2}(w)$ with $i<i'<j<j'$ define three segments between the crossing endpoints: the \emph{left inner segment} $w[i{+}1..i']$ (of length~$i'-i$), the \emph{right inner segment} $w[i'{+}1..j]$ (of length~$j-i'$), and the \emph{right gap} $w[j{+}1..j']$ (of length~$j'-j$). The string also has a prefix $w[1..i]$ and a suffix $w[j'{+}1..|w|]$. When applying the abstract factorization theorems (\Cref{thm:crossing,thm:strengthened}, or \Cref{cor:strengthened-cfl} when linkages hold on the ambient languages), we set $P_1 = w[1..i]$ (the prefix), $P_2 = w[i{+}1..i']$ (left inner segment), $P_3 = w[i'{+}1..j]$ (right inner segment), and $P_4 = w[j{+}1..|w|]$ (right gap and suffix combined), obtaining a complete factorization $w = P_1 P_2 P_3 P_4$. The inner segments $P_2$ and~$P_3$, which govern the dichotomy, correspond exactly to the geometric segments between the interleaved crossing endpoints.

Two measures on the crossing endpoints are natural:
\begin{itemize}
\item The \emph{crossing gap}: $\max(i'-i,\;j'-j)$, the larger of the push-endpoint distance and the pop-endpoint distance.
\item The \emph{inner segment measure}: $\max(|P_2|,|P_3|) = \max(i'-i,\;j-i')$, the maximum inner segment length.
\end{itemize}

These measures differ because the right gap $j'-j$ is independent of the inner segments. The inner segment measure governs the span of the arc $(i,j)$, which we call the \emph{left-starting arc} (the one whose push endpoint $i$ precedes the other arc's push endpoint $i'$). Its span is $j - i = (i'-i) + (j-i') = |P_2| + |P_3|$. When $\max(|P_2|,|P_3|) \le D$, this arc has span at most~$2D$, making it a ``short arc.''

\Cref{thm:strengthened} shows: if $\max(|P_2|,|P_3|) \to \infty$ for some family of crossing pairs with pump-sensitive linkages, then $L_1 \cap L_2$ is not CFL. The previous \Cref{thm:crossing} required all four segments to satisfy $|P_i| \ge n$; the strengthening to a single inner segment growing is strictly more general.

\subsection{The inner segment dichotomy}

\begin{definition}[Inner segment measure]\label{def:inner-segment}
For a crossing pair of arcs from different machines, written with $i<i'<j<j'$, the \emph{inner segment measure} is $\max(|P_2|,|P_3|) = \max(i'-i,\;j-i')$.
\end{definition}

When the inner segment measure $\max(i'-i,\;j-i') = O(1)$ while the crossing gap $\max(i'-i,\;j'-j) = \omega(1)$, we informally call the crossing \emph{near-tangent}: both inner endpoints $i'$ and~$j$ are within bounded distance of the left-starting arc's endpoints, but the right-starting arc's pop at~$j'$ is far away. The unbounded gap arises entirely from the right gap~$j'-j$, while both inner segments remain bounded. This is exactly the bounded-inner-segment, unbounded-gap case resolved by \Cref{thm:buffer}.

\Cref{fig:arcs} illustrates the two cases of the inner segment dichotomy.

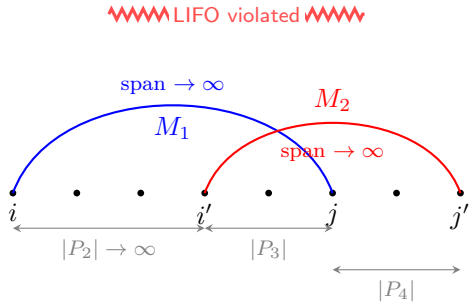
\begin{figure}[H]
\centering
\begin{tikzpicture}[>=stealth, scale=0.75]
  % === Left panel: Bounded inner segment (CFL) ===
  \node[anchor=south] at (4.5,3.8) {\textbf{Bounded inner segments: CFL}};
  % String positions
  \foreach \x in {0,1,...,9} \fill (\x,0) circle (1.5pt);
  \node[below,font=\footnotesize] at (0,0) {$i$};
  \node[below,font=\footnotesize] at (2,0) {$i'$};
  \node[below,font=\footnotesize] at (3,0) {$j$};
  \node[below,font=\footnotesize] at (9,0) {$j'$};
  % M1 arc (short) — span label above, machine label below
  \draw[thick, blue] (0,0) to[out=80,in=100]
    node[above,font=\scriptsize,pos=0.5,blue]{span $\le 2D$: buffer}
    node[below,font=\footnotesize,pos=0.5,blue]{$M_1$}
    (3,0);
  % M2 arc (long) — span label above, machine label below
  \draw[thick, red] (2,0) to[out=65,in=115]
    node[above,font=\scriptsize,pos=0.5,red]{span $\omega(1)$: stack}
    node[below,font=\footnotesize,pos=0.5,red]{$M_2$}
    (9,0);
  % Segment labels
  \draw[<->,gray,font=\scriptsize] (0,-0.55) -- node[below]{$|P_2|\le D$} (2,-0.55);
  \draw[<->,gray,font=\scriptsize] (2,-0.55) -- node[below]{$|P_3|\le D$} (3,-0.55);
  \draw[<->,gray,font=\scriptsize] (3,-1.2) -- node[below]{$|P_4| = \omega(1)$} (9,-1.2);
\end{tikzpicture}

\medskip

\begin{tikzpicture}[>=stealth, scale=0.85]
  % === Bottom panel: Growing inner segment (not CFL) ===
  \node[anchor=south] at (3.5,4.2) {\textbf{Growing inner segments: not CFL}};
  % String positions
  \foreach \x in {0,1,...,7} \fill (\x,0) circle (1.5pt);
  \node[below,font=\footnotesize] at (0,0) {$i$};
  \node[below,font=\footnotesize] at (3,0) {$i'$};
  \node[below,font=\footnotesize] at (5,0) {$j$};
  \node[below,font=\footnotesize] at (7,0) {$j'$};
  % M1 arc — span label above, machine label below
  \draw[thick, blue] (0,0) to[out=70,in=110]
    node[above,font=\scriptsize,pos=0.5,blue]{span $\to\infty$}
    node[below,font=\footnotesize,pos=0.5,blue]{$M_1$}
    (5,0);
  % M2 arc — machine label above
  \draw[thick, red] (3,0) to[out=70,in=110]
    node[above,font=\footnotesize,pos=0.5,red]{$M_2$}
    (7,0);
  % Red span label below M2 arc, centered at midpoint x=5
  \node[red,font=\scriptsize] at (5,0.6) {span $\to\infty$};
  % LIFO failure indicator: jagged line just above arc peaks
  \draw[red!70,very thick,decorate,decoration={zigzag,segment length=4pt,amplitude=2pt}]
    (1.5,2.8) -- (5.5,2.8);
  \node[red!70,font=\scriptsize\sffamily,fill=white,inner sep=1.5pt] at (3.5,2.8) {LIFO violated};
  % Segment labels
  \draw[<->,gray,font=\scriptsize] (0,-0.55) -- node[below]{$|P_2|\to\infty$} (3,-0.55);
  \draw[<->,gray,font=\scriptsize] (3,-0.55) -- node[below]{$|P_3|$} (5,-0.55);
  \draw[<->,gray,font=\scriptsize] (5,-1.2) -- node[below]{$|P_4|$} (7,-1.2);
\end{tikzpicture}
\caption{The inner segment dichotomy. \emph{Top:} bounded inner segments ($\max(|P_2|,|P_3|)\le D$); the left-starting crossing arc has span $\le 2D$ and is absorbed into a finite buffer (\Cref{thm:buffer}). \emph{Bottom:} growing inner segments ($\max(|P_2|,|P_3|)\to\infty$); the left-starting arc grows unboundedly, the pumping lemma substring $vxy$ cannot straddle both linked segments, and the language is not CFL (\Cref{thm:strengthened}).}
\label{fig:arcs}
\end{figure}

The full classification is:

\begin{center}
\resizebox{\textwidth}{!}{%
\begin{tabular}{llll}
\toprule
Crossing type & Conditions & Intersection & Reference \\
\midrule
No crossings & -- & CFL & (product construction) \\
Bounded gap & $\max(i'\!-\!i,\,j'\!-\!j) = O(1)$ & CFL & \Cref{thm:bounded-gap} \\
Bounded inner segment & $\max(|P_2|,|P_3|)=O(1)$, gap $\omega(1)$ & CFL & \Cref{thm:buffer} \\
Growing inner seg.$^\dagger$ & $\max(|P_2|,|P_3|)\to\infty$, gap $\omega(1)$ & Not CFL & \Cref{thm:strengthened} \\
\bottomrule
\end{tabular}}
\end{center}
{\small $^\dagger$Requires pump-sensitive linkages on the crossing segments. For block-counting CFLs, these always hold on~$L_1\cap L_2$. For general CFLs, see \Cref{rem:exhaustiveness}. The gap restriction $\omega(1)$ follows from the bounded-gap row: bounded gap always yields CFL, so the linkage conditions are satisfiable only when the gap is also unbounded.}

The table reveals a previously unrecognized case: bounded inner segments with \emph{unbounded} crossing gap, yet the intersection remains CFL (\Cref{thm:buffer}). The remaining open question concerns the non-CFL direction: whether growing inner segments \emph{without} pump-sensitive linkages can occur. For block-counting CFLs, linkages are automatic; for general CFLs, this is the main open problem (\Cref{rem:exhaustiveness}).

\subsection{Refutation of the sharp gap conjecture}

One might conjecture that crossing gap $\omega(1)$ alone implies non-CFL. The inner segment dichotomy reduces this to a geometric question: does crossing gap $\omega(1)$ force some crossing pair to have $\max(|P_2|,|P_3|) = \omega(1)$? We show the answer is \emph{no}.

\begin{proposition}[Refutation of the sharp gap conjecture]\label{prop:gap-refuted}
There exist context-free languages $L_1,L_2$ with PDAs $M_1,M_2$ such that:
\begin{enumerate}
\item[\textup{(i)}] $L(M_1)\cap L(M_2)$ is context-free;
\item[\textup{(ii)}] for infinitely many~$n$, there exist crossing arcs with crossing gap exceeding~$n$;
\item[\textup{(iii)}] the inner segment measure satisfies $\max(|P_2|,|P_3|) = O(1)$ for all crossing pairs.
\end{enumerate}
\end{proposition}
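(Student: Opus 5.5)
The construction is the one sketched in the introduction. One machine makes a single short push--pop pair, and an arc of the other machine straddles it over a long distance. Take $\Sigma=\{a,b,c,d\}$ and $L_2=\{a^m\,b\,c\,d^m : m\ge 0\}$. Its natural PDA $M_2$ pushes on each $a$, pushes once on $b$, and pops on each $d$. To make the $b$-push cross an arc of the other machine, we let $M_2$'s $b$-push be popped not at $c$ but at the first $d$ (replace the first $a$-push by nothing, so that the $b$-push stands in for it). Thus $M_2$ has arc $(i',j')$ with $i'$ the position of $b$ and $j'$ the position of the first $d$ reached after the matching structure, and $j'-i'$ can be made $\Theta(m)$ by placing the $d$'s appropriately. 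I would rather fix the geometry directly. Let $L_1=\{a^m\,b\,c\,d^m\}$ be recognized by a different PDA $M_1$: it performs no stack operation on the $a$'s, pushes one symbol on $a$ at the last $a$, and pops it at $c$. This is a finite-state check plus one push--pop, so $L(M_1)=a^*bc\,d^*$ is regular. Let $M_2$ recognize $\{a^m b c d^m\}$ by pushing on each $a$ and on $b$, and popping on $c$ and on each $d$, with the pop at $c$ removing the $b$-entry. To obtain a crossing with large gap, I instead let $M_2$ push on $b$ and pop that entry at the \emph{last} $d$. The $a$-entries are then popped by the first $m$ of $m{+}1$ $d$'s. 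In other words, I use $L_2=\{a^m b c d^{m+1}\}$ with $M_2$ pushing on each $a$ and on $b$, doing nothing on $c$, and popping on each $d$; LIFO then forces the $b$-entry to be popped at the first $d$. The plan is to choose the languages so that the arc orders work out, and to verify them by explicit position computation.

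Concretely, the final choice I would commit to is as follows. Let $M_1$ recognize the regular language $R=a\,b\,c^*$ plus one stack arc: push at position $1$ (the $a$) and pop at position $3$ (the first $c$), so the arc is $(1,3)$. Let $M_2$ recognize $\{a\,b\,c^m d^{m+1}\}$, pushing at position $2$ (the $b$), pushing on each $c$, and popping on each $d$. The $b$-entry sits at the bottom, so it is popped at the last position, giving the arc $(2,2m+3)$. With $L_1=\{a b c^m d^k\}$ using this $M_1$ (which pushes at $a$, pops at the first $c$, and then reads $c^*d^*$ with no stack operations), the intersection $L_1\cap L_2=L_2$ is context-free, which gives (i). The arcs $(1,3)$ and $(2,2m+3)$ cross with $i=1<i'=2<j=3<j'=2m+3$. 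The crossing gap is $\max(1,2m)\to\infty$, which gives (ii), and the inner segment measure is $\max(1,1)=1$.

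The main obstacle is (iii): showing that \emph{every} crossing pair has bounded inner measure, not just this one. Here the plan is to enumerate arcs. $M_1$ has exactly one arc, $(1,3)$. All arcs of $M_2$ other than $(2,2m+3)$ are nested $c$-to-$d$ arcs lying in positions $\ge 3$, and none of them has an endpoint strictly between $1$ and $3$, or starts before $1$. So the only possible crossings are those in which an $M_2$ arc has exactly one endpoint in $\{2\}$, and the only such arc is $(2,2m+3)$. Hence the pair above is the unique crossing pair, and the bound holds. Membership of both machines must also be checked: $M_1$ must read strings in normal form, one symbol per transition, which is immediate. Together these give (i)--(iii). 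Alternatively, (i) also follows from \Cref{thm:buffer} once that result is established.
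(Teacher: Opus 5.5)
Your final construction is correct and is essentially the paper's. A span-$2$ arc $(1,3)$ of one machine is crossed by the other machine's long arc starting at position $2$, which encloses its nested arcs; enumerating the arcs shows this is the only crossing pair, with crossing gap $2m$ and inner segment measure $1$. The differences are minor: you take $L_1$ regular, so (i) follows from closure of CFLs under intersection with regular languages, whereas the paper uses two non-regular languages and checks $L_1\cap L_2=\{aba\,d^ne^nf\,g^kh^k\}$ directly; also, your $R=a\,b\,c^*$ must read $a\,b\,c^+d^*$, matching your later description of $M_1$, since as written the intersection would be empty.
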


\begin{proof}
Let $\Sigma = \{a,b,d,e,f,g,h\}$. Define:
\begin{align*}
L_1 &= \{a\,b\,a\,w_1\,f\,g^k h^k : w_1\in\{d,e\}^*,\; k\ge 0\}, \\
L_2 &= \{a\,b\,a\,d^n e^n f\,w_2 : w_2\in\{g,h\}^*,\; n\ge 0\}.
\end{align*}
Both are context-free. $M_1$ pushes at position~1 (reading~$a$), pops at position~3 (reading the second~$a$), reads $d$'s, $e$'s, $f$ freely, then pushes $g$'s and pops $h$'s. $M_2$ pushes at position~2 (reading~$b$), pushes $d$'s, pops $e$'s (nested inside the $b$-push), pops at position~$3{+}2n{+}1$ (reading~$f$), then reads $g$'s and $h$'s freely.

\emph{Intersection.} $L_1\cap L_2 = \{a\,b\,a\,d^n e^n f\,g^k h^k : n,k\ge 0\}$, which is context-free: a PDA reads prefix~$aba$ in finite state, pushes~$d$'s and pops~$e$'s, reads~$f$, then pushes~$g$'s and pops~$h$'s.

\emph{Crossing structure.} $M_1$'s arc $(1,3)$ and $M_2$'s arc $(2,\,3{+}2n{+}1)$ satisfy $1<2<3<3{+}2n{+}1$, giving a crossing pair with inner segments $|P_2| = i'-i = 2-1 = 1$ and $|P_3| = j-i' = 3-2 = 1$, and right gap $j'-j=2n{+}1$. The inner segment measure is $\max(|P_2|,|P_3|)=1$: the ``interruption'' of $M_2$'s arc by $M_1$'s push-pop pair lasts exactly one position in each direction. Meanwhile, the crossing gap $\max(i'-i,\;j'-j) = \max(1,2n{+}1)\to\infty$. No other crossing pairs exist: $M_1$'s long arcs (in the $g$-$h$ region) and $M_2$'s arcs (at positions $\le 3{+}2n{+}1$) occupy disjoint position ranges.

Thus $L_1\cap L_2$ is CFL, the crossing gap is~$\omega(1)$, and all crossing pairs have inner segment measure~$O(1)$.
\end{proof}

This works because $M_1$'s crossing arc has bounded span ($j-i=2$): it is a short interruption of $M_2$'s long arc. Such short arcs can be absorbed into a finite buffer; they behave like finite-state conditions. The long non-crossing arcs of both machines are well-nested and share the stack without conflict. We formalize this intuition in \Cref{thm:buffer} below.

\begin{remark}\label{rem:gap-vs-inner}
The refutation separates two measures that might seem equivalent. The crossing gap $\max(i'-i,\;j'-j)$ grows without bound, but the inner segment measure $\max(|P_2|,|P_3|)$ stays bounded: one arc is ``short'' (bounded span) despite crossing a ``long'' arc (unbounded span). Context-freeness is governed by the inner segment measure, not by the crossing gap.
\end{remark}

\subsection{Block-counting crossings always have growing inner segments}\label{sec:block-inner}

For block-counting CFLs with crossing arcs, the right inner segment is always $\Theta(n)$: each arc connects blocks $B_i$ and~$B_j$ with $|B_i|=|B_j|=\Theta(n)$, and any crossing pair has at least one full block between the inner endpoints. Consequently, all crossings in the block-counting setting have $\max(|P_2|,|P_3|) = \Theta(n)$. Moreover, pump-sensitive linkages hold on the intersection~$L_1\cap L_2$: in the crossing-arc geometry, each segment $P_i$ lies within blocks constrained by some equality, so any pump that modifies characters in one linked segment without its partner changes a constrained block count while leaving its partner's unchanged, violating one of the intersection's constraints (cf.\ \Cref{rem:linkage-ambient}). Thus \Cref{thm:strengthened} resolves the entire block-counting case, recovering \Cref{thm:characterization} as a consequence of the inner-segment-based analysis.

\subsection{The bounded inner segment resolution theorem}

When $\max(|P_2|,|P_3|) = O(1)$ for all crossing pairs, the left-starting arc in each pair has span~$O(1)$. The geometric meaning is this: the inner segment measure captures the length of the longer inner segment between the interleaved endpoints of two crossing arcs. A crossing arc~$(i,j)$ that is left-starting (i.e., $i$ is the smallest endpoint among the four) has its push at~$i$ and pop at~$j$ separated by at most~$2D$ positions; in between, the other machine may push and pop, but this ``interruption'' is bounded. Such short arcs impose only local constraints, akin to finite-state conditions, that a bounded buffer in the finite control can absorb, while all long arcs go directly to the stack (long arcs from different machines cannot cross, since any crossing forces one arc to be short). Bufferability is determined by interruption duration, not by the overall distance between the two machines' arcs.

\begin{theorem}[Bounded inner segment resolution]\label{thm:buffer}
Let $M_1,M_2$ be PDAs with state sets~$Q_1,Q_2$ and stack alphabets~$\Gamma_1,\Gamma_2$. If there exists a constant~$D$ such that for every $w\in L(M_1)\cap L(M_2)$ and every crossing pair with $i<i'<j<j'$, the inner segment measure satisfies $\max(i'-i,\;j-i')\le D$, then $L(M_1)\cap L(M_2)$ is recognized by a nondeterministic PDA with at most
\[
  |Q_1|\cdot|Q_2|\cdot\bigl(1+(|\Gamma_1|{+}|\Gamma_2|)\cdot 2D\bigr)^{8D}
\]
states. In particular, $L(M_1)\cap L(M_2)$ is context-free, with state complexity exponential in~$D$ but independent of the input length.
\end{theorem}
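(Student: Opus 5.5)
The plan is to adapt the displacement-buffer construction of \Cref{thm:bounded-gap}. The difference is that short arcs are now held in the finite control instead of being displaced past at pop time. The new PDA $M$ has control states $(q_1,q_2,\beta)$, where $\beta$ is a buffer of at most $4D$ tagged, position-stamped pending pushes (pairs $(\ell,\gamma)$ together with a relative age in $\{1,\dots,2D\}$). Its stack alphabet is the tagged alphabet $\{(\ell,\gamma)\}$. At each input position $M$ simulates one step of each $M_\ell$. For each push, $M$ nondeterministically \emph{guesses} whether the arc will be short (span $\le 2D$) or long. A short push is placed in $\beta$ with age~$0$. A long push first forces every buffered entry that is older than it and still pending to be resolved, and then goes onto the real stack. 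A pop by $M_\ell$ is served either from the youngest matching-owner entry in $\beta$ (short arc) or from the stack top, which must carry tag~$\ell$ (long arc). Any buffered entry whose age exceeds $2D$ causes rejection. The buffer needs at most $2\cdot 2D$ slots per machine, since each position carries $\le 2$ pushes. Counting each slot as empty or holding one of $(|\Gamma_1|{+}|\Gamma_2|)\cdot 2D$ stamped symbols gives the bound $\bigl(1+(|\Gamma_1|{+}|\Gamma_2|)\cdot 2D\bigr)^{8D}$ with a little slack.

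Soundness is immediate, as in \Cref{thm:bounded-gap}. Every pop checks the symbol and owner of the entry it removes. Within one machine, buffered pops take the youngest pending entry of that owner, which is exactly $M_\ell$'s LIFO discipline restricted to short arcs. That discipline is consistent with the stack part, because a long push of the same machine is only made once all older buffered entries are closed or are known to be nested outside it. So every accepting run of $M$ projects to accepting runs of $M_1$ and $M_2$.

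Completeness is the heart of the proof. Fix accepting computations of $M_1$ and $M_2$ on $w$, and let $M$ guess ``short'' exactly for arcs of span $\le 2D$. I would prove two facts.
\begin{enumerate}
\item[(a)] Among long arcs (span $>2D$) from different machines there are no crossings. If $(i,j)$ and $(i',j')$ crossed with $i<i'<j<j'$, the hypothesis would give $j-i=(i'-i)+(j-i')\le 2D$, so the left-starting arc would be short.
\item[(b)] Long arcs within one machine are well-nested, since each $M_\ell$ is a PDA.
\end{enumerate}
By \Cref{lem:lifo}, the long arcs of both machines together are well-nested. Hence each long pop finds its target on top of the real stack, because short arcs never touch the stack. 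The buffer never overflows, since a short arc stays in $\beta$ for at most $2D$ positions.

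The main obstacle is the interaction between short and long arcs of the \emph{same} machine. A long arc of $M_\ell$ can be pushed while a short arc of $M_\ell$ opened earlier is still pending. In that case $M_\ell$'s own LIFO order forces the long arc to close before the short one, so it would itself be short, which is a contradiction. More precisely, a same-machine arc nested inside a short arc has span less than $2D$. So whenever a long push of $M_\ell$ occurs, no short arc of $M_\ell$ is pending. This is exactly the invariant the construction needs. I would state it as a separate lemma, and I would check carefully that the $\epsilon$-move decomposition of double pushes from \Cref{sec:bounded-gap} preserves the ``two pushes per position'' accounting used in the state bound.
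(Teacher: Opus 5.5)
Your construction is the paper's. It uses the same short/long classification (short iff span $\le 2D$), guessed at push time, and the same Claim~A: the left-starting arc of a crossing pair has span $(i'-i)+(j-i')\le 2D$. It also uses \Cref{lem:lifo} in the same way to put all long arcs on one stack, and the same timed $8D$-slot buffer for the state count.

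Where you differ is the same-machine ordering discipline, and this is a genuine improvement. The paper's pop rule (``if the expected symbol is in the buffer, resolve it there'') never checks that the removed entry is the top of $M_\ell$'s own stack, so its one-line soundness claim does not follow as written. For example, take $M_1$ for $\{wcw^R\}$ and $M_2$ accepting everything without stack moves. There are no crossing pairs, so the hypothesis holds for every $D$; take $D\ge 2$. Buffering the first $a$ of $abcab$ and stacking the $b$ then makes the paper's machine accept $abcab$. Your lemma is exactly what shows the extra discipline costs nothing in completeness: no short arc of $M_\ell$ is pending when a long arc of $M_\ell$ is pushed, because an arc opened inside a pending short arc closes before it.

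Two rules need to be stated exactly.
\begin{enumerate}
\item The test at a long push of $M_\ell$ must range over owner-$\ell$ buffered entries only, as in your last paragraph. Your opening formulation (``every buffered entry that is older than it'') would reject correct runs. In \Cref{prop:gap-refuted}, for instance, $M_2$'s long push at position~$2$ occurs while $M_1$'s short arc $(1,3)$ is pending.
\item A pop of $M_\ell$ may take the stack top only when no owner-$\ell$ entry is buffered. If the choice between buffer and stack is free, the run that stacks $a$, buffers $b$, pops $a$ from the stack and then $b$ from the buffer again accepts $abcab$. The oracle run obeys this rule: an owner-$\ell$ short arc still pending when a long arc of $M_\ell$ closes would have to enclose it, and so would have span $>2D$.
\end{enumerate}
With both rules, every buffered owner-$\ell$ entry stays younger than every stacked owner-$\ell$ entry. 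Each simulated pop then removes $M_\ell$'s true top, and your soundness argument goes through. Keep the buffer slots ordered by push time, since the two pushes of one input position carry the same age.
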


\begin{proof}
We construct a nondeterministic PDA~$M$ recognizing $L(M_1)\cap L(M_2)$. The proof proceeds in two stages: first, we define an \emph{oracle strategy} that classifies each arc as ``short'' or ``long'' using full knowledge of the accepting computations; then, we show that a nondeterministic PDA can guess this classification and simulate both machines correctly.

\emph{Stage~1: Oracle classification.} Fix $w\in L(M_1)\cap L(M_2)$ and fix accepting computations of~$M_1$ and~$M_2$ on~$w$, determining push-pop matchings $\mu_{M_1}(w)$ and $\mu_{M_2}(w)$. An arc $(i,j)\in\mu_{M_k}(w)$ is \emph{short} if $j - i \le 2D$; otherwise it is \emph{long}. (This classification depends only on the arc's own endpoints. Stage~2 shows a nondeterministic PDA can guess it; the oracle merely establishes that a correct guess \emph{exists}.)

\emph{Claim~A: In every crossing pair, at least one arc is short.} Suppose arcs $(i,j)$ and $(i',j')$ from different machines cross with $i<i'<j<j'$. By hypothesis, $\max(i'-i,\;j-i')\le D$, so $j - i = (i'-i)+(j-i') \le 2D$. That is, the arc~$(i,j)$ is short. (The other arc~$(i',j')$ may have unbounded span; $j' - i'$ is not constrained by~$D$.)

\emph{Claim~B: Long arcs are jointly well-nested.} Each machine's arcs are individually well-nested. We show that no long arc from~$M_1$ crosses any long arc from~$M_2$. If arcs $(a,b)\in\mu_{M_1}(w)$ and $(c,d)\in\mu_{M_2}(w)$ cross, then Claim~A guarantees one of them has span~$\le 2D$, i.e., one is short. Contrapositively, if both are long (span~$> 2D$), they do not cross. By \Cref{lem:lifo}, the long arcs from both machines form a jointly well-nested set.

\emph{Stage~2: PDA construction.} $M$'s finite control comprises the state pairs $(q_1,q_2)\in Q_1\times Q_2$ and a buffer~$B$ holding at most~$8D$ pending entries, each recording a stack symbol from~$\Gamma_1\cup\Gamma_2$ and a countdown timer in~$\{1,\dots,2D\}$. The buffer contributes a finite factor to the state count: at most $(1+(|\Gamma_1|{+}|\Gamma_2|)\cdot 2D)^{8D}$, since each of the~$8D$ slots is either empty or holds a symbol-timer pair.

$M$'s stack alphabet consists of tagged pairs $(\gamma,\mathrm{owner})$ with $\gamma\in\Gamma_1\cup\Gamma_2$ and $\mathrm{owner}\in\{1,2\}$.

At each input position, $M$ simulates both~$M_1$ and~$M_2$ on their respective state components. When a simulated machine performs a push, $M$ nondeterministically guesses whether the arc is short or long:
\begin{itemize}
\item \emph{Long guess:} push the symbol onto $M$'s stack, tagged with its owner.
\item \emph{Short guess:} place the symbol in the buffer with countdown $2D$.
\end{itemize}
At each step, all countdown timers decrement. A buffer entry whose timer reaches~$0$ without being popped causes this computation path to reject.

When a simulated machine performs a pop: if the expected symbol is in the buffer, resolve it there; if the top of stack has the correct tag and symbol, pop it; otherwise, reject this path.

\emph{Completeness (correct guesses accept).} Given $w\in L(M_1)\cap L(M_2)$, the oracle classification from Stage~1 defines a specific guess sequence. On this path:
\begin{itemize}
\item Every short arc (span $\le 2D$) is buffered and resolves before its timer expires.
\item Every long arc is pushed to the stack. By Claim~B, long arcs from different machines are jointly well-nested, so the LIFO discipline handles them correctly: nested arcs are pushed after and popped before the enclosing arc, maintaining stack order. Owner tags resolve which machine each symbol belongs to.
\item At most~$4D$ short arcs per machine are simultaneously pending (each spans $\le 2D$ positions, at most two pushes per position per machine), so the buffer never exceeds~$8D$ entries.
\end{itemize}
Both simulated machines reach accepting states and the buffer empties, so $M$ accepts.

\emph{Soundness.} If $M$ accepts~$w$ on any nondeterministic path, both simulated machines accepted~$w$, so $w\in L(M_1)\cap L(M_2)$.

Therefore $L(M_1)\cap L(M_2) = L(M)$ is context-free.
\end{proof}

Combining \Cref{thm:strengthened} and \Cref{thm:buffer}:

\begin{corollary}[Inner segment dichotomy]\label{cor:dichotomy}
Let $M_1,M_2$ be PDAs. Let $L = L(M_1)\cap L(M_2)$.
\begin{enumerate}
\item[\textup{(a)}] If there exists a constant~$D$ such that every crossing pair on every $w\in L$ has inner segment measure $\max(|P_2|,|P_3|)\le D$, then $L$ is context-free \textup{(\Cref{thm:buffer})}.
\item[\textup{(b)}] If for infinitely many~$n$, some $w_n\in L$ has a crossing pair with pump-sensitive linkages and $\max(|P_2|,|P_3|)\ge n$, then $L$ is not context-free \textup{(\Cref{thm:strengthened})}.
\end{enumerate}
We call this a ``dichotomy'' in the sense that the two directions identify the governing quantity (inner segment measure rather than crossing gap). The term is exact for block-counting CFLs, where conditions~\textup{(a)} and~\textup{(b)} are exhaustive. For general CFLs, it is conditional: the gap between~\textup{(a)} and~\textup{(b)} concerns whether growing inner segments always carry pump-sensitive linkages (see \Cref{rem:exhaustiveness}).
\end{corollary}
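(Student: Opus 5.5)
The corollary combines two results already proved, so the plan is to derive each part directly from them and then justify the closing remark about exhaustiveness.

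\textbf{Part (a).} Its hypothesis is, verbatim, the hypothesis of \Cref{thm:buffer}. It is a uniform constant $D$ bounding $\max(i'-i,\;j-i')$ over every crossing pair on every $w\in L$. I will invoke that theorem directly. Its proof goes through Claim~A (every crossing pair contains a short arc), Claim~B (long arcs are jointly well-nested, via \Cref{lem:lifo}) and the timer-buffer PDA. That proof gives a nondeterministic PDA for $L$ with state count $|Q_1|\cdot|Q_2|\cdot(1+(|\Gamma_1|+|\Gamma_2|)\cdot 2D)^{8D}$, so $L$ is context-free.

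\textbf{Part (b).} For each of the infinitely many $n$, I take the crossing pair $(i,j)\in\mu_{M_1}(w_n)$, $(i',j')\in\mu_{M_2}(w_n)$, relabelling the machines if needed so that $i<i'<j<j'$. I factor $w_n$ exactly as in \Cref{sec:crossing-geometry}:
\[
P_1=w_n[1..i],\quad P_2=w_n[i{+}1..i'],\quad P_3=w_n[i'{+}1..j],\quad P_4=w_n[j{+}1..|w_n|].
\]
The strict inequalities give $|P_2|=i'-i\ge 1$ and $|P_3|=j-i'\ge 1$, and by assumption $\max(|P_2|,|P_3|)\ge n$. This is condition (iii) of \Cref{thm:strengthened}.

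The phrase ``pump-sensitive linkages'' in (b) must be read as linkages $(P_1,P_3)$ and $(P_2,P_4)$ holding on $L$ itself. If instead they are given on $L_1=L(M_1)$ and $L_2=L(M_2)$ respectively, I transfer them to $L$ by the inheritance argument of \Cref{cor:strengthened-cfl}. Membership failure in a superset implies failure in $L\subseteq L_k$, and the quantifier over factorizations of $w_n$ is unchanged. This supplies conditions (i)--(ii), and \Cref{thm:strengthened} then yields that $L$ is not context-free.

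\textbf{Exhaustiveness for block-counting CFLs.} The step needing the most care is the closing claim that (a) and (b) are exhaustive in the block-counting case, since this is the only part that is not a literal citation. My plan is a case split on whether $C_1,C_2$ are jointly well-nested.

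If there are no crossings between the two arc sets, the hypothesis of (a) holds vacuously. In that case the product construction, or \Cref{thm:characterization}~($\Leftarrow$), applies.

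If there is a crossing, I use \Cref{sec:block-inner}. Any crossing of block arcs places a whole block of length $\Theta(n)$ between the inner endpoints, so the inner segment measure is unbounded. Moreover, by \Cref{rem:linkage-ambient} the linkages hold on $L_1\cap L_2$, so (b) applies.

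One leftover configuration belongs to neither part: distinct arcs sharing an endpoint, which violate \Cref{def:joint-wn}(iv) without crossing. I will flag that this is handled by the three-way equality argument in \Cref{thm:characterization}~(b), not by the dichotomy itself. For general CFLs I will only note that the gap between (a) and (b) is the missing linkage hypothesis, deferring to \Cref{rem:exhaustiveness}.
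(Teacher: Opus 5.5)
Your derivations of (a) and (b) are exactly the paper's. The corollary is obtained by combining \Cref{thm:buffer} and \Cref{thm:strengthened}, using the factorization of \Cref{sec:crossing-geometry} (which gives $|P_2|,|P_3|\ge 1$) and, where needed, the inheritance step of \Cref{cor:strengthened-cfl}.

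The error is in your exhaustiveness paragraph. Parts (a) and (b) quantify over crossing pairs of push--pop arcs in $\mu_{M_1}(w)$ and $\mu_{M_2}(w)$. These are arcs between \emph{input positions}, not elements of the block-level constraint sets $C_1,C_2$. Your claim that distinct arcs sharing an endpoint ``belong to neither part'' comes from conflating the two levels. At the position level, a shared block endpoint does produce crossings with unbounded inner segments:
\begin{itemize}
\item For $C_1=\{(1,2)\}$, $C_2=\{(2,3)\}$ on $a^nb^nc^n$, $M_1$'s arc $(1,2n)$ and $M_2$'s arc $(n{+}1,3n)$ satisfy $1<n{+}1<2n<3n$, with $|P_2|=n$ and $|P_3|=n-1$.
\item For a shared left endpoint, use the arcs $(1,2n)$ and $(2,3n{-}1)$.
\item For a shared right endpoint, use the arcs $(n,2n{+}1)$ and $(n{+}1,3n)$.
\end{itemize}
In each case the three-way equality makes $(P_1,P_3)$ and $(P_2,P_4)$ pump-sensitive on $L$, so (b) applies. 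As written, your plan asserts a failure of the very exhaustiveness claim you are meant to prove. The detour through \Cref{thm:characterization} is also unnecessary.

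The paper's route needs no case split on how \Cref{def:joint-wn} fails. Either the inner segment measure is bounded, giving (a), or it is unbounded, in which case \Cref{sec:block-inner} supplies the linkages on $L_1\cap L_2$, giving (b).

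Both that route and yours tacitly take $M_1,M_2$ to be the canonical block PDAs (push throughout $B_i$, pop throughout $B_j$). You should state this, because the corollary is phrased for arbitrary PDAs and exhaustiveness fails otherwise. For example:
\begin{itemize}
\item Take $L_1=a^*b^*c^*d^*$ (so $C_1=\emptyset$), recognized by a PDA that pushes one marker at the first $a$ and pops it at the first $c$.
\item Take the canonical PDA for $L_2=\{a^mb^nc^kd^n\}$.
\item On $a^nb^n c\, d^n$, the arcs $(1,2n{+}1)$ and $(n{+}1,3n{+}1)$ cross with $|P_2|=|P_3|=n$, so (a) fails.
\item Every crossing pair has $P_1=w[1..1]$, and pumping that first $a$ keeps the string in $L_1\cap L_2=L_2$, so (b) fails.
\end{itemize}
Yet the intersection is context-free.
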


\begin{remark}[Exhaustiveness]\label{rem:exhaustiveness}
For block-counting CFLs, conditions~\textup{(a)} and~\textup{(b)} are exhaustive: the rigid block structure guarantees that every crossing pair with growing inner segments carries pump-sensitive linkages (\Cref{sec:block-inner}), so the dichotomy is complete, recovering \Cref{thm:characterization}.

For general CFLs, exhaustiveness is open. The gap is the case where inner segments grow but no crossing pair has pump-sensitive linkages; alternative accepting computations in~$L_1$ or~$L_2$ might absorb pumps that would otherwise exit~$L$. Direction~\textup{(a)} is unconditional and requires no linkage assumptions. Direction~\textup{(b)} requires that the specific segment decomposition induced by~$M_1,M_2$'s matchings produces linkages on~$L = L_1\cap L_2$. Whether a non-CFL intersection always admits such a decomposition, equivalently, whether there exist PDAs whose crossing structure witnesses the non-CFL-ness via pump-sensitive linkages, remains open (\Cref{sec:open}).

We note that no example of a non-CFL intersection \emph{lacking} pump-sensitive linkages is currently known. Every non-CFL intersection in the literature (block-counting languages, Shieber's Swiss German construction, positional matching languages) exhibits pump-sensitive linkages on its crossing segments. A counterexample would require two CFLs whose intersection is non-CFL, yet where for every crossing pair with growing inner segments, some factorization touching one linked segment alone still pumps within~$L_1\cap L_2$. This would mean the crossing segments have enough ``slack'' that local modifications can be absorbed, despite the global language being non-CFL. Constructing such an example appears difficult: most natural CFL intersection constructions use equality constraints that create tight segment coupling.
\end{remark}

\subsection{Comparison with the bounded-gap construction}

The bounded-gap construction (\Cref{thm:bounded-gap}) pushes all symbols directly to the stack and displaces at most~$2k$ entries at pop time. When the crossing gap is bounded, the displacement count stays bounded. When the crossing gap is unbounded, displacement may grow without bound, exceeding any finite buffer. \Cref{thm:buffer} succeeds by a different strategy: instead of displacing at pop time, it classifies each arc as short (span~$\le 2D$) or long at push time. Short arcs (which include the left-starting arc of every crossing pair, by Claim~A) are buffered entirely in finite control; long arcs are pushed to the stack. Claim~B guarantees the stacked long arcs are jointly well-nested, since any crossing pair forces at least one arc to be short. The nondeterministic PDA guesses the short/long classification without needing to see the future.

%% ====================================================================
\section{Related work}\label{sec:related}
%% ====================================================================

\subsection{State complexity of automata intersection}

Yu, Zhuang, and Salomaa~\cite{yu1994} establish that the intersection of two DFAs with $m$ and $n$ states requires at most $mn$ states, tight in the worst case. For unary alphabets, the state complexity is $\operatorname{lcm}(m,n)$ when $m$ and~$n$ are coprime, which connects to the Chinese Remainder Theorem. Dutta and Saikia~\cite{dutta2022} have also connected the CRT to finite automata product constructions. Our \Cref{thm:buffer} extends this line of work to the pushdown case, where the intersection may leave the CFL class entirely.

\subsection{Intersection of pushdown automata}

The intersection non-emptiness problem for two CFLs is undecidable~\cite{hopcroft2007}. Swernofsky and Wehar~\cite{wehar2015} establish tight complexity bounds for deciding non-emptiness of intersections involving regular and context-free languages, connecting these problems to graph-theoretic hardness. Our results complement these decidability and complexity bounds by characterizing the \emph{language class} of two-CFL intersections: we identify geometric conditions on the crossing arcs that determine whether the intersection remains context-free.

\subsection{Mildly context-sensitive languages and LCFRS}

Vijay-Shanker, Weir, and Joshi~\cite{vijayshanker1987} introduced linear context-free rewriting systems (LCFRS), which generalize CFGs by allowing nonterminals to derive tuples of strings. An LCFRS of fan-out~$k$ generates exactly the $k$-multiple context-free languages ($k$-MCFL); fan-out~1 gives the CFLs. Seki, Matsumura, Fujii, and Kasami~\cite{seki1991} independently developed multiple context-free grammars (MCFGs), equivalent to LCFRS.

Many non-CFL intersections of two CFLs land in LCFRS(2). For instance, $\{a^n b^n c^n d^n\}$ is LCFRS(2) but not CFL; it arises as the intersection of two CFLs with crossing arcs. Whether two-CFL intersections with a bounded number of independent crossing families always lie in LCFRS($k$) for bounded~$k$ is an open problem (\Cref{sec:open}, Question~7).

Our inner segment measure connects naturally to LCFRS fan-out. When $\max(|P_2|,|P_3|)\le D$, the left-starting arc in each crossing pair has bounded span ($\le 2D$), and the buffer construction (\Cref{thm:buffer}) shows the intersection is CFL (fan-out~1). When inner segments grow, each crossing pair contributes an additional ``thread'' of dependency, suggesting the non-CFL intersection has fan-out~2.

\subsection{Crossing dependencies and well-nestedness}

Shieber~\cite{shieber1985} proved Swiss German is not context-free via cross-serial dependencies reducible to $\{a^m b^n c^m d^n\}$. Joshi~\cite{joshi1990} showed crossing dependencies require mildly context-sensitive formalisms. Bodirsky, Kuhlmann, and M\"ohl~\cite{bodirsky2005} formalized \emph{well-nestedness}, and Kuhlmann~\cite{kuhlmann2013} established a hierarchy in the LCFRS framework. Our \Cref{thm:characterization} can be seen as an intersection-theoretic analogue: where their work characterizes which single grammars stay within CFL, our result characterizes which pairs of CFL constraints stay within CFL under intersection.

%% ====================================================================
\section{Discussion}\label{sec:discussion}
%% ====================================================================

\subsection{Position in the Chomsky hierarchy}

The inner segment dichotomy places CFL intersection within a graded hierarchy. For finite automata, the intersection of two regular languages is always regular, with $mn$ product states in the worst case~\cite{yu1994}. For pushdown automata, our results show the intersection may leave the CFL class entirely. Combined with the LCFRS framework~\cite{vijayshanker1987,seki1991}:

\begin{center}
\resizebox{\textwidth}{!}{%
\begin{tabular}{clll}
\toprule
Fan-out & Class & Intersection behavior & Reference \\
\midrule
-- & Regular & Always closed; $mn$ product states (unary: $\operatorname{lcm}$) & \cite{yu1994} \\
1 & CFL & Closed if bounded gap, bounded inner seg., or jointly well-nested$^\ddagger$ & \Cref{thm:bounded-gap,thm:buffer,thm:characterization} \\
2 & LCFRS(2) & Non-CFL with crossing arcs (conjectured upper bound) & \Cref{thm:strengthened}; Q.\,7 \\
\bottomrule
\end{tabular}}
\end{center}
{\small $^\ddagger$``Iff'' for block-counting CFLs (\Cref{thm:characterization}); ``if'' for general CFLs (non-CFL direction requires pump-sensitive linkages; see \Cref{rem:exhaustiveness}).}

\medskip

For $k\ge 3$ CFLs, the situation is qualitatively different. The intersection of any finite number of CFLs remains context-sensitive (hence decidable), but every recursively enumerable language is the homomorphic image of the intersection of two linear CFLs~\cite{baker1974}: the homomorphism, not the intersection itself, introduces full Turing power.

The inner segment dichotomy (\Cref{cor:dichotomy}) characterizes a sufficient condition for crossing the CFL/LCFRS(2) boundary: growing inner segments (with pump-sensitive linkages) push the intersection to LCFRS(2) or beyond, where a single stack no longer suffices and multiple threads of dependency are required.

\subsection{Geometry vs.\ complexity}

The non-CFL results concern \emph{memory access discipline}, not memory amount. For the block-counting CFLs central to our examples, both the individual languages and their non-CFL intersection (e.g., $\{a^n b^n c^n d^n\}$) can be verified in $O(\log n)$ space on a Turing machine by counting and comparing block lengths. The distinction is that individual CFL regularities fit on a stack (LIFO access), while the joint regularity requires non-LIFO access. The inner segment measure captures precisely this: when some inner segment grows unboundedly, the two machines' stack operations interleave in a way that no single stack can simulate.

%% ====================================================================
\section{Open questions}\label{sec:open}
%% ====================================================================

\begin{enumerate}

\item \textbf{Necessity of pump-sensitive linkages.} The inner segment dichotomy (\Cref{cor:dichotomy}) is complete for block-counting CFLs but conditional for general CFLs: direction~(b) requires pump-sensitive linkages on the crossing segments. If $L_1\cap L_2$ is not context-free and has crossing arcs with growing inner segments, must some PDA pair $(M_1,M_2)$ for $(L_1,L_2)$ produce crossing segments that carry pump-sensitive linkages on~$L_1\cap L_2$? A positive answer would make the dichotomy unconditionally complete. A negative answer would exhibit a non-CFL intersection where no machine-induced decomposition yields the linkage condition, requiring a fundamentally different proof technique (e.g., Ogden's lemma or the interchange lemma) for the non-CFL direction.

\item \textbf{Grammar-independent matching arcs.} The matching arcs $\mu_M(w)$ depend on the specific PDA~$M$ and its accepting computation, not on the language alone. Two PDAs for the same CFL can produce different matchings and different crossing structures. Visibly pushdown languages (VPLs) provide a natural grammar-independent notion of matching arcs via a fixed partition $\Sigma = \Sigma_c \cup \Sigma_r \cup \Sigma_\ell$. What is the intersection theory of VPLs over different partitions $(\Sigma_c, \Sigma_r, \Sigma_\ell)$ and $(\Sigma'_c, \Sigma'_r, \Sigma'_\ell)$? This may be the most tractable path to a fully grammar-independent characterization, and would also resolve question~1 in the VPL setting (since VPL matchings are determined by the input string).

\item \textbf{Invariance of crossing measures under PDA conversion.} The inner segment bound~$D$ in \Cref{thm:buffer} depends on the specific PDAs $M_1,M_2$ and their GNF normal forms, not just on~$L_1$ and~$L_2$. Is the property ``there exist PDAs $M_1$ for~$L_1$ and $M_2$ for~$L_2$ such that all crossing pairs have $\max(|P_2|,|P_3|)\le D$'' equivalent to ``$L_1\cap L_2$ is context-free''? More generally, does every PDA pair for~$(L_1,L_2)$ yield the same bounded/unbounded classification of inner segments, or can a different PDA choice change the crossing structure qualitatively?

\item \textbf{State complexity of the buffer construction.} \Cref{thm:buffer} constructs a PDA with $(1{+}(|\Gamma_1|{+}|\Gamma_2|)\cdot 2D)^{8D}\cdot|Q_1|\cdot|Q_2|$ states, exponential in~$D$. Is this blowup necessary, or can a polynomial-state PDA be found? More specifically, is the bound tight for any family of PDAs, and can the base or exponent be reduced?

\item \textbf{Descriptive complexity of CFL intersections.} Are there cases where two CFLs have a CFL intersection that requires a strictly larger grammar (more nonterminals or rules) than either component? This would provide a quantitative complexity measure within the CFL class.

\item \textbf{Indexed languages.} Some PDA intersection problems may produce languages at the indexed level~\cite{aho1968} rather than at the CFL or LCFRS level. Characterizing which crossing structures produce indexed languages vs.\ LCFRS languages is open.

\item \textbf{LCFRS fan-out of CFL intersections.} When $L_1\cap L_2$ is not CFL, it is often recognizable by an LCFRS of fan-out~2~\cite{seki1991,vijayshanker1987}. Is this always the case for two-CFL intersections? More precisely, does the number of independent crossing families in the matching graphs determine the LCFRS fan-out, and is it always bounded for the intersection of exactly two CFLs?

\item \textbf{Decidability of bounded inner segments.} Given two context-free grammars $G_1, G_2$, is it decidable whether there exist PDAs $M_1, M_2$ for $L(G_1), L(G_2)$ such that all crossing pairs have bounded inner segment measure? Since CFL intersection emptiness is undecidable in general, the answer is likely negative, but a proof or a reduction would clarify the computational boundary of the inner segment dichotomy.

\end{enumerate}

\section*{Acknowledgements}

\begin{sloppypar}
This work was partially funded by national funds through FCT\,--\,Foundation for Science and Technology, I.P., in the context of the project UIDB/00127/2020 (\url{https://doi.org/10.54499/UIDB/00127/2020}).
\end{sloppypar}

\section*{Declaration of generative AI use}

During the preparation of this work, the author used Claude (Anthropic) to assist with drafting the text and refining the structure of the mathematical arguments. The author verified all proofs manually and takes full responsibility for the content of the publication.

\section*{Credit authorship contribution statement}

\textbf{Jorge Miguel Silva:} Conceptualization, Methodology, Formal analysis, Writing -- original draft, Writing -- review \& editing.

\section*{Declaration of competing interest}

The author declares no competing interests.

\section*{Data availability}

No data were used for the research described in this article. All results are of a theoretical nature.

\bibliographystyle{elsarticle-harv}
\bibliography{refereces}
\end{document}